\DeclarePairedDelimiter\abs{\lvert}{\rvert}%
\let\oldabs\abs
\def\abs{\@ifstar{\oldabs}{\oldabs*}}
\DeclareMathOperator*{\argmax}{arg\,max}
\newcolumntype{C}[1]{>{\centering}m{#1}}
\newcommand{\EE}[1]{\mathbb{E}\left[#1\right]}
\newcommand{\ceil}[1]{\left\lceil #1 \right\rceil}
\newcommand{\Algo}{{\it Holographic Beam (HoloBeam) }}
\newcommand{\HB}{HoloBeam}
\newcommand{\AlgoI}{$\beta_1$-HoloBeam}
\newcommand{\AlgoII}{$\beta_2$-HoloBeam}
\newcommand{\Algoi}{$\beta_i$-HoloBeam}
\newcommand{\Pro}{\Pr}
\theoremstyle{plain}
\newtheorem{thm}{Theorem}
\newtheorem{lem}{Lemma}
\newtheorem{prop}{Proposition}
\newtheorem{defi}{Definition}
\def\BibTeX{{\rm B\kern-.05em{\sc i\kern-.025em b}\kern-.08em
    T\kern-.1667em\lower.7ex\hbox{E}\kern-.125emX}}
\begin{document}

\title{HoloBeam: Learning Optimal Beamforming in Far-Field Holographic Metasurface Transceivers\\
\thanks{
Manjesh K. Hanawal thanks funding support from SERB, Govt of India, through Core Research Grant (CRG/2022/008807) and MATRICS grant (MTR/2021/000645), and DST-Inria Targeted Programme.}}


\author{Debamita Ghosh, IITB-Monash Research Academy, India,  e-mail: debamita.ghosh@iitb.ac.in  \\ Manjesh K. Hanawal, MLiONS Lab, IEOR, IIT Bombay, India, e-mail: mhanawal@iitb.ac.in \\ Nikola Zlatanov, Innopolis University, Russia, e-mail: n.zlatanov@innopolis.ru }
\maketitle

\begin{abstract}
 Holographic Metasurface Transceivers (HMTs) are emerging as cost-effective substitutes to large antenna arrays for beamforming in Millimeter and TeraHertz wave communication. However, to achieve desired channel gains through beamforming in HMT, phase-shifts of a large number of elements need to be appropriately set, which is challenging. Also, these optimal phase-shifts depend on the location of the receivers, which could be unknown. In this work, we develop a learning algorithm using a {\it fixed-budget multi-armed bandit framework} to beamform and maximize received signal strength at the receiver for far-field regions. Our algorithm, named \Algo exploits the parametric form of channel gains of the beams, which can be expressed in terms of two {\it phase-shifting parameters}. Even after parameterization, the problem is still challenging as phase-shifting parameters take continuous values. To overcome this, {\it\HB} works with the discrete values of phase-shifting parameters and exploits their unimodal relations with channel gains to learn the optimal values faster. We upper bound the probability of {\it\HB} incorrectly identifying the (discrete) optimal phase-shift parameters in terms of the number of pilots used in learning. We show that this probability decays exponentially with the number of pilot signals. We demonstrate that {\it\HB} outperforms state-of-the-art algorithms through extensive simulations.

\end{abstract}

\begin{IEEEkeywords}
Holographic Metasurface Transceivers, beam forming, bandit learning, fixed budget pure exploration
\end{IEEEkeywords}

\section{Introduction}
\label{sec:intro}
The demand for high-data rates is growing with the emergence of data-intensive applications, which are expected to offer energy efficiency and low latency. 
New solutions are needed to enhance the network capacity and connectivity in the next-generation wireless networks to support these applications. Millimeter wave (mmWave) and TeraHertz (THz) communication technologies are potential solutions due to the availability of large bandwidths at very high frequencies. However, mmWave and THz signals are 
prone to severe deterioration from reflections and absorption, significantly restricting their direct adaptation in cellular networks \cite{wan2021terahertz}, \cite{jamali2020intelligent}. 

Base stations (BSs) with massive antenna arrays can be used to focus the transmission energy in the desired direction through beamforming. This can offer high beamforming 
gains and compensate for the signal degradation \cite{larsson2014massive}. However, implementing a BS with a large antenna array increases hardware complexity and costs. HMTs are emerging as a low-cost solution for building a massive antenna array \cite{wu2019intelligent, huang2020holographic, huang2018metasurface}. 

An HMT is a rectangular surface consisting of nearly infinite low-cost metamaterial elements densely deployed into a limited surface area to form a spatially continuous transceiver aperture. HMTs can be considered as an extension of the traditional massive antenna arrays with discrete antennas to the continuous reflecting surfaces \cite{hu2018beyond}. Each metamaterial element in HMT acts as a phase-shifting antenna that can change the phase of transmitting/receiving signals and thereby  beamform the transmit/receive signals to/from the desired direction with high channel gains and enhancing the received signal strength (RSS) at the receiver \cite{hu2018beyond}. 
The channel gain from HMT to receivers depends on the phase-shift at each element. Identifying the correct phase-shifts for all the elements is a fundamental and challenging task \cite{yoo2021holographic, zhang2022beam, demir2022channel}, since the required phase-shifts depend on various factors such as channel state information (CSI) and user location, which are unknown. Our goal in this work is to develop a beamforming algorithm that learns optimal phase-shifts and maximizes RSS in HMT systems. As the number of phase-shifting elements in an HMT can be large, learning the optimal phase-shifts is not trivial. We tackle this task by exploiting the structural properties of the channel gains in the phase-shift values. 


In the far-field region, the channel gain of beams can be expressed in a parametric form by setting each phase-shift of the HMT as a function of two common {\it phase-shifting parameters} \cite{ghermezcheshmeh2021channel, Arxiv2022ghoshub3}. 
To maximize the RSS at the receiver location, these parameters must be set to a value that corresponds to the receiver's location. However, in practice, the receiver's location is unknown,  and the phase-shifting parameters associated with its location must be learned. In particular, these parameters are related to the azimuth and elevation of the receiver's location from the center of the HMT and need to be estimated by measuring the RSS received at the receiver. We exploit this fact to develop a learning algorithm to beamform HMT towards the receiver and maximize RSS. 

The two phase-shifting parameters take continuous values, and learning the optimal values is not practically feasible in a finite time. Therefore, we discretize the range of these parameters and search for the optimal values over the discretized space. The discretization is done in such a way that optimal values in the discrete space and the continuous space are close. 
We use a fixed-budget multi-armed bandit (MAB) framework to develop an algorithm named \Algo to identify the optimal phase-shifting parameters in the discrete space. The algorithm works in two phases. Each phase learns the optimal value of one parameter while keeping the other parameter constant. Further, the algorithm exploits the {\it unimodal structure} of the  RSS for each parameter to reduce the pilots required to identify the optimal value of these parameters. The algorithm works in batches where 
one-third of discrete parameter values are eliminated at the end of every batch.   

Next, we provide theoretical guarantees on the performance of {\it \HB}. 
The analysis uses tail bounds of the non-central Chi-squared distribution to demonstrate that the error probability goes to zero exponentially in the number of pilot symbols used to estimate the parameters. In particular, we establish that {\it \HB}  achieves an error probability of the order of $\mathcal{O}\bigg(\sum\limits_{i=1}^2\log_2K_i\exp\left\{-\frac{n\Delta_i^2}{\sigma^4(1+3G/\sigma^2)^4}\right\}\bigg)$ after using $n$ pilot symbols during channel estimation, where $K_i$ denote the number of phase-shifting parameters in the discrete space, $\Delta_i$ is the minimum gap between mean RSS of successive values in the discrete set of $\beta_i$ for $i=1,2$, $\sigma^2$ is the variance of noise, and $G$ is associated with dimensions and radiation properties of the HMT.

In summary, our main contributions are as follows:
\begin{itemize}
    \item We exploit the parametric properties of the far-filed channel gains to 
    set up the beamforming problem in the HMT system as a fixed-budget MAB problem. 
\item We exploit the unimodal structure of the far-field channel gains and propose an algorithm named {\it \HB} that identifies the best phase-shift parameters. 
\item We upper bound the error probability of {\it \HB} and demonstrate that it is strictly smaller than the case when no unimodality is exploited. 
\item Extensive numerical simulations validated that {\it \HB } significantly outperforms the state-of-the-art algorithms.


\end{itemize}

\section{Related Work}
\label{sec:relwork}
Some prominent channel estimation schemes such as exhaustive search \cite{dai2006efficient}, hierarchical search \cite{xiao2016hierarchical, chen2018beam, zhang2017codebook}, and compressed sensing \cite{chen2018beam, zhang2017codebook, CSpeng2015enhanced, CStsai2018efficient} are applicable to the HMT systems. However, all these schemes require high training overheads and system latency. Some CSI estimation  schemes, such as least-square estimation-based approach \cite{WCL2022channelestimationHMMIMO}, and subspace-based approach \cite{demir2022channel} are developed for holographic MIMO (HMIMO) communications. However, the computational complexity scales up with the number of phase-shift elements in \cite{WCL2022channelestimationHMMIMO} and \cite{demir2022channel}. The authors in \cite{an2023tutorial} give an overview of the efficient channel estimation approaches on HMIMO communications.
An efficient CSI scheme of line-of-sight (LOS) dominated far-field channel between an HMT and a user is proposed in \cite{ghermezcheshmeh2021channel}. The computational cost and the training overhead of the proposed scheme do not scale with the number of phase-shift elements of the HMT, but they did not provide any theoretical guarantee on their proposed algorithm. A pure-exploration based algorithm with a theoretical guarantee is proposed in \cite{Arxiv2022learningHMT}, which outperforms the one in \cite{ghermezcheshmeh2021channel}. However, none of the above-mentioned works exploit the unimodal structure of the far-field channel model of the HMT system.

The papers \cite{INFOCOM2018_EfficientBeamAllignment, Infocom2020_MAMBA, hba} exploit the unimodal structure of the RSS in mmWave massive antenna systems using a MAB approach in cumulative regret minimization setting that balances exploration and exploitation. However, due to continuous exploration in regret settings, sub-optimal phase-shifts can be used for data transfer, resulting in outages. Therefore, regret is not the right performance measure. To overcome these issues, we address the beamforming problem of the HMT system in {\it pure-exploration setting}. 

The pure-exploration strategies proposed in \cite{ICC2021_HOSUB, TWC2022fastBAPureExploration, Arxiv2022mmwaveBAHierarchicalSubtreeElim} exploit the benefits of hierarchical codebooks and the unimodality structure of the beam signal strengths to achieve higher data rates in mmWave massive antenna systems.  
All these works are based on {\it fixed confidence setting} that uses the benefits of hierarchical codebooks. However, {\it pure-exploration fixed-budget setting} \cite{audibert2010best,JMLR2016_ComlexityofBestArmIdentification} are more suitable for the beamforming problem, as the exploration can be completed within the channel estimation phase. The {\it Unimodal Bandit for Best Beam (UB3)} algorithm developed in \cite{Arxiv2022ghoshub3} exploits the unimodal structure of the RSS of the mmWave massive antenna system and identifies the best beam with high probability within a fixed budget of channel estimation. Note that all the above-mentioned works are applied to mmWave massive antenna systems, and none of them are applied to the HMT systems. 

We develop a learning algorithm that identifies the best phase-shifts using {\it a fixed-budget pure-exploration setup} by exploiting the parametric and unimodal structure of the far-field channel gains of the HMT systems. 
To our knowledge, this has not been studied in HMT systems. 

The paper is organized as follows. The channel model for the HMT system is given in Sec. \ref{sec:setup}. The channel estimation strategy is formulated in Sec. \ref{sec:discretization}. The proposed algorithm for learning the optimal phase-shifts is given in Sec. \ref{sec:algo}. We provided the theoretical guarantee of the proposed algorithm in Sec. \ref{sec:analysis}. Numerical evaluation of the proposed algorithm is provided in Sec. \ref{sec:experiments}. Finally, Sec. \ref{sec:discussion} concludes the paper.

\section{Channel Model and Channel Estimation}
\label{sec:setup}
In this section, we consider the channel model  as given in \cite{Arxiv2022learningHMT} for which we propose our algorithm. We follow the setup and notation provided in \cite{ghermezcheshmeh2021channel,Arxiv2022learningHMT}.

\vspace{-1mm}
\subsection{Channel Model}
We consider an HMT of width $L_x$ and length $L_y$ comprised of $M$ number of sub-wavelength phase-shifting elements. We assume a LOS between an HMT and each user \cite{akdeniz2014millimeter}, and the non-line-of-sight (NLOS) components are incorporated into the noise model \cite{JSAC2014_MillimeterWaveChannelModel,TWC2021jointBT}. We consider that each user sends orthogonal pilots to the HMT, and thereby, we focus on the CSI estimation between the HMT and a typical user, see  Fig. \ref{fig:schematic_figure}.
\begin{figure}[ht]
\vspace{-3mm}
    \centering
    \includegraphics[height = 5cm, width = 7.5cm 
    ]{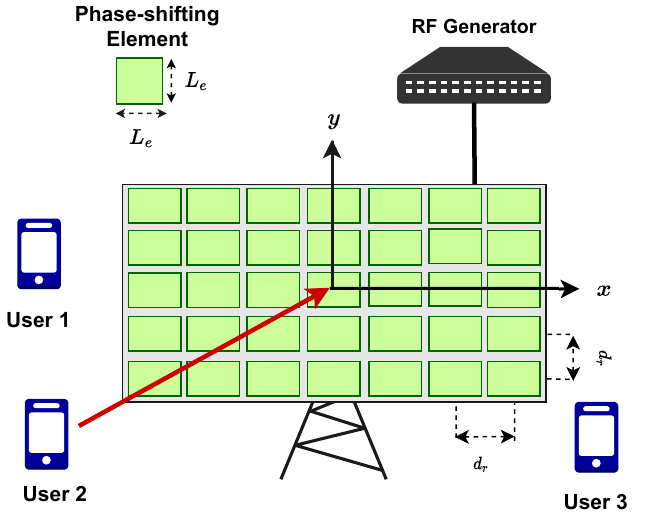} 
    \caption{\small{The HMT-assisted wireless communication system \cite{ghermezcheshmeh2021channel}.}}
    \label{fig:schematic_figure}
    \vspace{-6mm}
\end{figure}
Assuming that the HMT lies in the Cartesian coordinate system with the centre of the surface at the origin, let $\beta_{m_xm_y}$ be the phase-shift at the $(m_x, m_y)^{th}$ element, where $m_x \in \left \{ -\frac{(L_x/d_r)-1}{2},\dots,\frac{(L_x/d_r)-1}{2} \right \}$, $m_y \in \left \{ -\frac{(L_y/d_r)-1}{2},\dots,\frac{(L_y/d_r)-1}{2} \right \}$, and $d_r$ be the distance between two neighbouring phase-shifting elements. We denote $\lambda$ as the wavelength of the carrier frequency, $k_0=2\pi/\lambda$ be the wave number, $d_0$ be the distance between the user and the center of the HMT, and $F_{m_xm_y}$  be the effect of the size and power radiation pattern of the $(m_x,m_y)^{th}$ phase-shifting element on the channel coefficient \cite{ellingson2021path}. In the far-field, the radiation pattern of all the phase-shifting elements of the HMT is identical, i.e., $F_{m_x m_y} = F,\; \forall m_x, m_y$ holds \cite{ellingson2021path}. Finally, we denote $\theta$ and $\phi$ as the elevation and azimuth angles of the signals from the user to the centre of the HMT, refer to Fig. 2 in \cite{Arxiv2022learningHMT}.
We consider the phase-shift imposed by the $(m_x,m_y)^{th}$ element is set to
\begin{equation}
\label{eqn:PhaseShiftParamterizess}
\beta_{m_xm_y} = -\mod(k_0d_r(m_x\beta_1 + m_y\beta_2), 2\pi),
\end{equation}
where $\beta_1$ and $\beta_2$ are the phase-shifting parameters \cite{ghermezcheshmeh2021channel, selvan2017fraunhofer, najafi2020physics} and they are the only degrees of freedom in $\beta_{m_xm_y}$. With $\beta_{m_xm_y}$ as in \eqref{eqn:PhaseShiftParamterizess}, the channel between the HMT and the typical user \cite{Arxiv2022learningHMT, ghermezcheshmeh2021channel} for far-field regions is given by
\begin{align}
\label{eqn:contphaseHMMIMOuserchannel}
    H^{ff}(\beta_1, \beta_2) &= \left(\frac{\sqrt{F}\lambda e^{-jk_0d_0}}{4\pi d_0}\right)L_xL_y \mathrm{sinc}\big(K_x\pi(\alpha_1 - \beta_1)\big)\nonumber\\
&\qquad\times\mathrm{sinc}\big(K_y\pi(\alpha_2 - \beta_2)\big),
\end{align}
where $K_x = L_x/\lambda, K_y = L_y/\lambda, \alpha_1 = \sin(\theta)\cos(\phi), \alpha_2 = \sin(\theta)\sin(\phi),$ and  $\mathrm{sinc}(x) = \frac{\sin(x)}{x}$. Note that $\alpha_1 \in [-1,1]$ and $\alpha_2 \in [-1,1]$, and their values depend on the user's location, i.e., on $\theta$ and $\phi$. Therefore, $\alpha_1$ and $\alpha_2$ are the two unknown parameters that need to be learned by the HMT \cite{Arxiv2022learningHMT}. 
\begin{figure}[ht]
    \centering
    \vspace{-8mm}
\includegraphics[height = 4.6cm, width = 8cm, trim = {1 1 1 2}, clip]{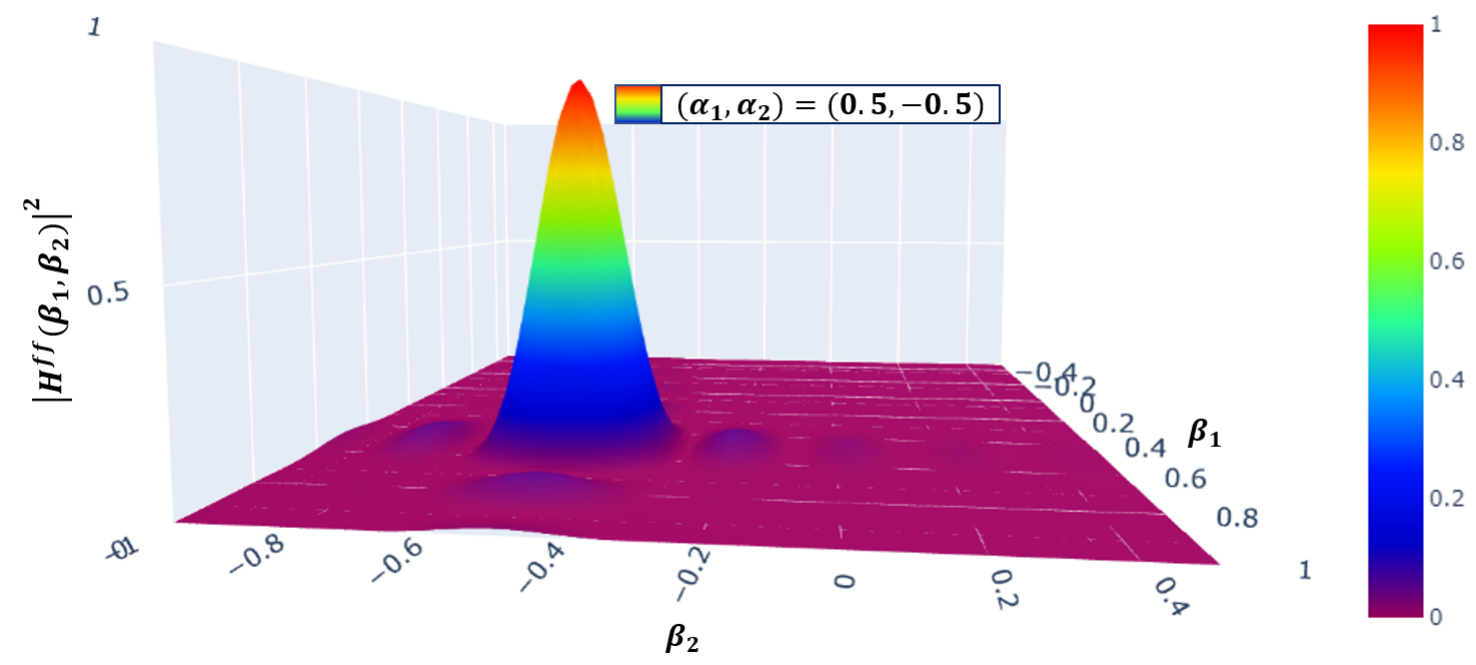}
    \caption{\small{The optimal value is attained at the highest peak of the central lobe of $|H^{ff}(\beta_1,\beta_2)|$, i.e. $(\beta^*_1,\beta^*_2) = (\alpha_1,\alpha_2) = (0.5, -0.5).$}} 
    \label{fig:abs_h}
    \vspace{-5mm}
\end{figure}

\subsection{Channel Estimation}
\label{sec:Objective}

The RSS at the HMT for fixed phase-shifting parameters $(\beta_1,\beta_2),$ denoted by $r(\beta_1,\beta_2)$,  is given by
 \begin{align}
    r(\beta_1,\beta_2) &= 
    \big| \sqrt{P}\times H^{ff}(\beta_1,\beta_2) + \zeta\big|^2, \label{eqn:absRSS}
\end{align}
which is induced when the user sends a pilot symbol $x_p = \sqrt{P}$ to the HMT, where $P$ is the pilot transmit power and $\zeta$  is the complex-valued additive white Gaussian noise (AWGN) with zero mean and variance $\sigma^2$ at the HMT. The mean RSS of $r(\beta_1,\beta_2)$, denoted by $\mu(\beta_1,\beta_2)$, is given by
\begin{align}
    \hspace{-1mm}\mu(\beta_1,\beta_2)= \EE{r(\beta_1,\beta_2)} = \abs{\sqrt{P}\times H^{ff}(\beta_1,\beta_2)}^2 + \sigma^2.\label{eqn:expectedRSS}
\end{align}
Our goal is to identify the optimal phase-shifting parameters that maximize the mean RSS $\mu(\beta_1,\beta_2)$, i.e., 
\begin{align}
\label{eqn:expobjfn}
    (\beta_1^*,\beta_2^*) &= \argmax_{\substack{\beta_1 \in [-1,1], \beta_2 \in [-1,1]}}\mu(\beta_1,\beta_2).
\end{align} 
Notice that $\mu(\beta_1,\beta_2)$ is maximized at the same points where the absolute value of the HMT-user channel given by \eqref{eqn:contphaseHMMIMOuserchannel} is maximized, i.e., $\beta_1^*=\alpha_1,\beta_2^*=\alpha_2$  (see Fig. \ref{fig:abs_h}). Note that $(\alpha_1, \alpha_2)$ are unknown and need to be learned.  Moreover, $\alpha_1$ and $\alpha_2$ take continuous values in the range $[-1,1]$ making it infeasible to learn their exact values within a finite number of pilot symbols. We thus quantize the range of the phase-shifting parameters and work with their discrete values. 

\section{Discretizating Phase-Shifting Parameters}
\label{sec:discretization}
In this section, we discretize the range of the phase-shifting parameters such that the optimal values of these parameters in the discrete set are not far from the optimal values in the continuous space, and the difference in their corresponding mean RSS value remains small.

\vspace{-1mm}
\subsection{Discretization of the Phase-Shifting Parameters}
We discretize the range of $\beta_1$ into $K_1$ equally spaced points. The set of discrete phase-shift parameters, denoted by $\mathcal{B}_1$, is
\begin{align}
    \label{eqn:Beta1_set}
    \mathcal{B}_1 = \{\beta^1_1,\beta^2_1,\dots,\beta^{K_1}_1\},
\end{align}
where $\beta^k_1 = \beta^1_1 + (k-1)d_1, k=1,2,\ldots,K_1$ and $d_1$ is a constant to be set later. Similarly, we discretize the range of $\beta_2$ into $K_2$ equally spaced points, denoted by $\mathcal{B}_2$, is 
\begin{align}
    \label{eqn:Beta2_set}
    \mathcal{B}_2 = \{\beta^1_2,\beta^2_2,\dots,\beta^{K_2}_2\},
\end{align}
where $\beta^k_2 = \beta^1_2+ (k-1)d_2, k=1,2,\ldots, K_2$ and $d_2$ is a constant to be set later. 

\textbf{Objective:} We revise objective in  Eq.~(\ref{eqn:expobjfn}) to the following by restricting to the discrete sets $\mathcal{B}_1$ and $\mathcal{B}_2$.
\begin{align}
\label{eqn:expobjfndiscete}
    (\beta_1^{k_1^*},\beta_2^{k_2^*}) &= \argmax_{\substack{\beta_1 \in \mathcal{B}_1,\beta_2 \in \mathcal{B}_2}} \mu(\beta_1,\beta_2).
\end{align} 
Here $k_1^*$ and $k_2^*$ denote the index of the optimal values in the set $\mathcal{B}_1$ and $\mathcal{B}_2$, respectively. We construct $\mathcal{B}_1$ and $\mathcal{B}_2$ such that the mean RSS achieved from \eqref{eqn:expobjfn} remains close to that obtained from \eqref{eqn:expobjfndiscete}. To demonstrate this, we use the following notations and unimodality property of the mean RSS in each argument over the discrete spaces. 

For a fixed $\beta^0_2 \in [-1,1]$ we write $r_1(\beta_1):=r(\beta_1,\beta^0_2)$, $\mu_1(\beta_1):=\mu(\beta_1,\beta^0_2)$ for all $\beta_1 \in \mathcal{B}_1$. Similarly, for a fixed $\beta^0_1 \in [-1,1]$ we write $r_2(\beta_2):=r(\beta^0_1,\beta_2)$, $\mu_2(\beta_2)=\mu(\beta^0_1,\beta_2)$ for all $\beta_2 \in \mathcal{B}_2$, . We set $\beta_1^1=\beta_2^1=-1$ and $\beta_1^{K_1}=\beta_2^{K_2}=1$ in $\mathcal{B}_1$ and $\mathcal{B}_2$.


\begin{defi}[Unimodality]
For any discrete set $\mathcal{B}=\{b_1, b_2, \ldots, b_K\}$,
we say that a real-valued function $f:\mathcal{B}\rightarrow \mathcal{R}$, is unimodal if there exits an $1<i^* < K$ such that 
\begin{align}
\label{eqn:unimodaldefn}
    f(b_1)<f(b_2)<\dots <f(b_{i^*})>..>f(b_{K-1})>f(b_K)
\end{align}
\end{defi}
\noindent
The following lemma established the unimodal property of the function $\mu_1: \mathcal{B}_1\rightarrow \mathcal{R}$ for any $\beta_2^0 \in [-1,1]$. For notational convenience we write  $G:= \big|\sqrt{P}\left(\frac{\sqrt{F}\lambda e^{-jk_0d_0}}{4\pi d_0}\right)L_xL_y\big|^2$. 
\begin{lem}
\label{lem:discreteBeta1}
Let $\mathcal{B}_1$ be such that $K_1 = \ceil{2/d_1}+1$ and $d_1 = \frac{1}{K_x}$, then $\mu_1(\beta_1)$ is unimodal on $\mathcal{B}_1$. Similarly, let $K_2 = \ceil{2/d_2}+1$ and $d_2 = \frac{1}{K_y}$ in $\mathcal{B}_2$, then $\mu_2(\beta_2)$ is unimodal on $\mathcal{B}_2$. 
\end{lem}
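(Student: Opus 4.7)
The plan is to show that, once we restrict to $\beta_1\in\mathcal{B}_1$ with spacing $d_1=1/K_x$, the only nontrivial $\beta_1$-dependence of $\mu_1$ collapses to an explicit expression whose unimodality in the index $k$ is immediate. The argument for $\mu_2$ is then identical by symmetry.

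\textbf{Step 1 (reduce to a single sinc-squared factor).} From \eqref{eqn:contphaseHMMIMOuserchannel} and \eqref{eqn:expectedRSS}, for any fixed $\beta_2^0\in[-1,1]$,
\begin{equation*}
\mu_1(\beta_1) \;=\; G\cdot\mathrm{sinc}^2\!\bigl(K_x\pi(\alpha_1-\beta_1)\bigr)\cdot\mathrm{sinc}^2\!\bigl(K_y\pi(\alpha_2-\beta_2^0)\bigr)+\sigma^2,
\end{equation*}
and only the first sinc-squared factor depends on $\beta_1$. Writing $C_0:=G\,\mathrm{sinc}^2(K_y\pi(\alpha_2-\beta_2^0))\geq 0$, it therefore suffices to show that the sequence $k\mapsto \mathrm{sinc}^2(K_x\pi(\alpha_1-\beta_1^k))$ is unimodal on $\{1,\dots,K_1\}$.

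\textbf{Step 2 (exploit that $d_1=1/K_x$ equals the zero-spacing of sinc).} Let $k^\ast\in\{1,\dots,K_1\}$ be an index minimizing $|\alpha_1-\beta_1^k|$, and write $\alpha_1=\beta_1^{k^\ast}+\delta'$ with $|K_x\delta'|\leq 1/2$. Since $\beta_1^{k}=\beta_1^{k^\ast}+(k-k^\ast)/K_x$, we have
\begin{equation*}
K_x\pi(\alpha_1-\beta_1^k)\;=\;(k^\ast-k)\pi+K_x\pi\delta'.
\end{equation*}
Using $\sin(n\pi+a)=(-1)^n\sin(a)$ for integers $n$, this gives $\sin^2\!\bigl(K_x\pi(\alpha_1-\beta_1^k)\bigr)=\sin^2(K_x\pi\delta')$, a constant in $k$. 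Consequently, for every $k$,
\begin{equation*}
\mathrm{sinc}^2\!\bigl(K_x\pi(\alpha_1-\beta_1^k)\bigr)\;=\;\frac{\sin^2(K_x\pi\delta')}{\pi^2\bigl((k-k^\ast)-K_x\delta'\bigr)^2},
\end{equation*}
with the convention that the right-hand side equals $1$ at $k=k^\ast$ when $K_x\delta'=0$.

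\textbf{Step 3 (unimodality of the resulting sequence).} The denominator $\pi^2\bigl((k-k^\ast)-K_x\delta'\bigr)^2$ is a convex quadratic in the real variable $k$, minimized at $k=k^\ast+K_x\delta'$; since $|K_x\delta'|\leq 1/2$, the unique integer minimizer in $\{1,\dots,K_1\}$ is $k^\ast$, and moving away from $k^\ast$ by one integer step strictly increases $|(k-k^\ast)-K_x\delta'|$, hence the denominator. Thus the sequence is strictly increasing in $k$ for $k<k^\ast$ and strictly decreasing for $k>k^\ast$, multiplied by $C_0$ and shifted by $\sigma^2$ this transfers to $\mu_1$, giving the unimodal chain in \eqref{eqn:unimodaldefn} with $i^\ast=k^\ast$. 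The (boundary) degenerate cases where $k^\ast\in\{1,K_1\}$ are handled by the same monotonicity argument.

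\textbf{Main obstacle.} The delicate point is that the continuous map $\beta_1\mapsto\mathrm{sinc}^2(K_x\pi(\alpha_1-\beta_1))$ is \emph{not} unimodal: it has infinitely many side-lobes, and under an arbitrary discretization the sampled sequence could fail to be unimodal. The crux of the proof is therefore the choice $d_1=1/K_x$, exactly one zero-spacing of the sinc function, which forces the numerator $\sin^2(K_x\pi(\alpha_1-\beta_1^k))$ to be independent of $k$ and leaves only the monotone $1/x^2$ envelope to govern the shape. Everything else is algebra.
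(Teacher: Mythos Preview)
Your proof is correct and follows essentially the same approach as the paper: both hinge on the observation that with spacing $d_1=1/K_x$ the numerator $\sin^2\!\bigl(K_x\pi(\alpha_1-\beta_1^k)\bigr)$ is independent of $k$, leaving only the $1/(\alpha_1-\beta_1^k)^2$ envelope to govern the shape. The paper expresses this via the consecutive-term ratio $\dfrac{\mu_1(\beta_1^i)-\sigma^2}{\mu_1(\beta_1^{i+1})-\sigma^2}=\Bigl|\dfrac{\alpha_1-\beta_1^{i+1}}{\alpha_1-\beta_1^i}\Bigr|^2$, whereas you write the closed-form sequence directly and read off the monotonicity from the quadratic denominator; the content is identical.
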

\begin{proof}
The proof is given in Sec. \ref{subsec:ProofLemma1}.
\end{proof}
For $\beta_2$ fixed at $\beta^0_2$, every side lobe of $|H^{ff}(\beta_1, \beta^0_2)|$ hits zero at the periodicity of $\pi$, and the central lobe hits zero at the periodicity of $2\pi$. When we discretize the phase-shifting parameter $\beta_1$ as in Lemma \ref{lem:discreteBeta1}, we get one discrete value in each side lobe and two values in the central lobe of $|H^{ff}(\beta_1, \beta^0_2)|$, (see Fig. \ref{fig:disccontmeanRSS}). One of these values in the central lobe is closest to $\alpha_1$, given by $\beta^{k_1^*}_1$. Moreover, if one of the values in the central lobe hits zero, the other will be the optimal value, i.e., $\alpha_1$. The following proposition bounds the difference in the optimal mean RSS obtained from (\ref{eqn:expobjfn}) and (\ref{eqn:expobjfndiscete}). 
\begin{prop}
\label{cor:reducedmeanRSS}
Consider $\mathcal{B}_1$ and $\mathcal{B}_2$ be as specified in Lemma \ref{lem:discreteBeta1}. 
Let $|\beta^{k_1^*}_1 - \alpha_1| =\epsilon_1$ and $|\beta^{k_2^*}_2 - \alpha_2|=\epsilon_2$. Then,
\begin{align}
&\abs{\mu(\alpha_1,\alpha_2) -  \mu(\beta^{k_1^*}_1, \beta^{k_2^*}_2)} \nonumber\\
&= G
\big(1 - \big|\mathrm{sinc}\big( K_x\pi\epsilon_1\big)\mathrm{sinc}\big(K_y\pi \epsilon_2\big)\big|^2\big).\label{eqn:epsilondiff}\\
& \leq G\left[1 - \left(2/\pi\right)^4\right].\label{eqn:epsilondiffbound}
\end{align}
\end{prop}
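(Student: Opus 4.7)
The plan is to prove the equality in \eqref{eqn:epsilondiff} by direct substitution and then obtain the bound \eqref{eqn:epsilondiffbound} by using the maximum possible gap between $\alpha_i$ and its closest grid point, together with monotonicity of $|\mathrm{sinc}|$ on $[0,\pi/2]$.

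First, I would expand $\mu$ using \eqref{eqn:expectedRSS} and \eqref{eqn:contphaseHMMIMOuserchannel}. Since $|\sqrt{P}H^{ff}(\beta_1,\beta_2)|^2 = G\,|\mathrm{sinc}(K_x\pi(\alpha_1-\beta_1))|^2\,|\mathrm{sinc}(K_y\pi(\alpha_2-\beta_2))|^2$, substituting $(\beta_1,\beta_2)=(\alpha_1,\alpha_2)$ gives $\mu(\alpha_1,\alpha_2)=G+\sigma^2$ (using $\mathrm{sinc}(0)=1$), and substituting $(\beta_1,\beta_2)=(\beta_1^{k_1^*},\beta_2^{k_2^*})$ together with the definitions $|\beta_1^{k_1^*}-\alpha_1|=\epsilon_1$ and $|\beta_2^{k_2^*}-\alpha_2|=\epsilon_2$ yields $\mu(\beta_1^{k_1^*},\beta_2^{k_2^*}) = G\,|\mathrm{sinc}(K_x\pi\epsilon_1)\,\mathrm{sinc}(K_y\pi\epsilon_2)|^2+\sigma^2$ (using that $\mathrm{sinc}$ is even). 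Subtracting and taking the absolute value gives \eqref{eqn:epsilondiff} immediately; the $\sigma^2$ terms cancel, and the quantity in parentheses is nonnegative because each sinc factor has magnitude at most $1$.

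For the bound \eqref{eqn:epsilondiffbound}, I would first argue that $\epsilon_1 \leq d_1/2 = 1/(2K_x)$ and $\epsilon_2 \leq d_2/2 = 1/(2K_y)$. This follows from the construction of $\mathcal{B}_1$ and $\mathcal{B}_2$ in Lemma~\ref{lem:discreteBeta1}: the grid has spacing $d_1 = 1/K_x$ covering $[-1,1]$, so every point in $[-1,1]$ is within $d_1/2$ of the nearest grid point; combined with the discussion preceding the proposition (which identifies $\beta_1^{k_1^*}$ as the discrete value in the central lobe closest to $\alpha_1$, since $|\mathrm{sinc}|$ is strictly larger in the central lobe than in any side lobe), this forces $\epsilon_1 \le d_1/2$, and symmetrically for $\epsilon_2$.

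Consequently, $K_x\pi\epsilon_1 \le \pi/2$ and $K_y\pi\epsilon_2 \le \pi/2$. On $[0,\pi/2]$ the function $x \mapsto \sin(x)/x$ is nonnegative and monotonically decreasing, attaining its minimum $2/\pi$ at $x=\pi/2$. Therefore $|\mathrm{sinc}(K_x\pi\epsilon_1)| \ge 2/\pi$ and $|\mathrm{sinc}(K_y\pi\epsilon_2)| \ge 2/\pi$, so the product squared is at least $(2/\pi)^4$. Plugging this lower bound into \eqref{eqn:epsilondiff} yields \eqref{eqn:epsilondiffbound}. The only mildly subtle step is justifying $\epsilon_i \le d_i/2$; everything else is straightforward substitution and a one-variable monotonicity argument.
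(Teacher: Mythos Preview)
Your proposal is correct and follows essentially the same approach as the paper: direct substitution into \eqref{eqn:expectedRSS} and \eqref{eqn:contphaseHMMIMOuserchannel} to obtain the equality \eqref{eqn:epsilondiff}, then bounding $\epsilon_i \le 1/(2K_i)$ and using $|\mathrm{sinc}(x)| \ge 2/\pi$ on $[0,\pi/2]$ to conclude \eqref{eqn:epsilondiffbound}. The paper phrases the $\epsilon_i$ bound as a ``worst case'' argument (the two central-lobe grid points being equidistant from $\alpha_i$) rather than your nearest-grid-point formulation, but the content is identical.
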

\begin{proof}
The proof is given in Sec. \ref{subsec:ProofCorollary1}.
\end{proof}
Note that as $\epsilon_1,\epsilon_2 \to 0$, $\mu(\beta^{k_1^*}_1, \beta^{k_2^*}_2) \to \mu(\alpha_1, \alpha_2)$ in \eqref{eqn:epsilondiff}. Further, the proof establishes that $\epsilon_1 \leq 1/(2K_x)$ and $\epsilon_2 \leq 1/(2K_y)$, to obtain the bound  (\ref{eqn:epsilondiffbound}). The worst-case scenario occurs when the mean RSS values of both discrete parameters in the central lobe are the same and result in the maximum error, i.e., $\epsilon_1 = 1/(2K_x)$ and $\epsilon_2 = 1/(2K_y)$.
\begin{figure}[ht]
    \centering
    \includegraphics[width = 8cm, height = 4.8cm, trim = {0.1cm 0cm 0cm 0.6cm}, clip]{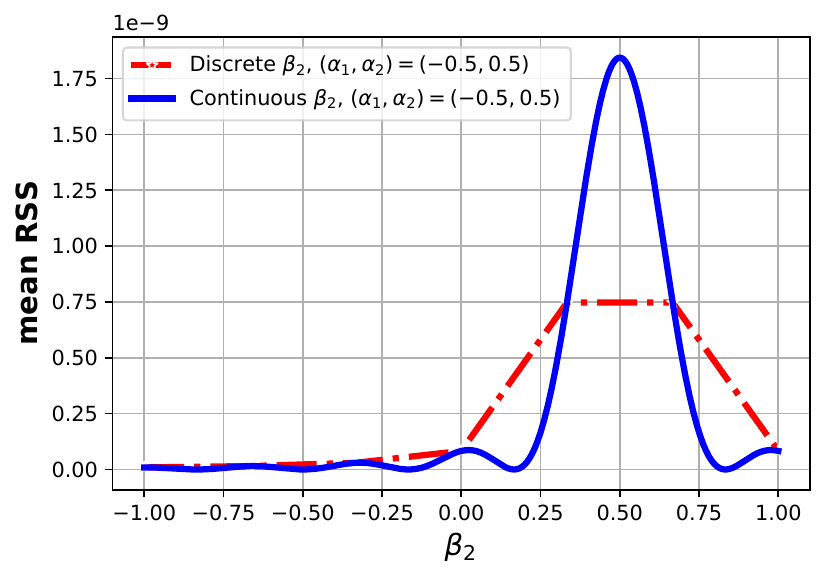}
    \caption{\small{Mean RSS vs Discrete and Continuous $\beta_2$}, where $\beta^0_1= -0.5$.}
    \label{fig:disccontmeanRSS}
    \vspace{-3mm}
\end{figure}

\subsection{Problem Formulation}
\label{subsec:probform}
We assume a block fading channel model where the duration of each block is denoted as $T=T_E+T_D$. We use duration $T_E$ to estimate the phase-shifting parameters and the remaining $T_D$ for data transmission. We consider a slotted system where $n$ pilot symbols are used over $T_E$. Each pilot symbol is assumed to be of one slot duration. We consider an additive white Gaussian noise channel, where HMT observes noisy RSS values in each slot. For each phase-shifting parameter, the observations are independently and identically distributed.

 We address the channel estimation problem as a fixed-budget pure-exploration MAB approach \cite{audibert2010best, ICML2012_PACSubsetSelection}. In a MAB framework, a policy is any strategy that selects the best phase-shifting parameter value based on noisy RSS observed in the previous slots.
 Let $\mathbf{\Pi}$ denote the set of pure exploration strategies that output the best parameter values using $n$ pilots symbols. For any policy $\pi \in \Pi$, let $(\beta^{\hat{k}_1^\pi}_1, \beta^{\hat{k}_2^\pi}_2) \in \mathcal{B}_1\times \mathcal{B}_2$ denote the parameters output by the policy. Our objective is to learn a policy that minimizes the probability of misidentifying the optimal phase-shifting parameters as given by
 \begin{align}
\label{eqn:errorprob}
\min_{\pi \in \mathbf{\Pi}}\Pr \bigg (\bigcup\limits_{i=1,2} \Big\{\beta^{\hat{k}_i^\pi}_i \neq \beta^*_i\Big\}\bigg),
\end{align}
where $\Pr(\cdot)$ is calculated with respect to the samples induced by the policy.  
\vspace{-3mm}
\section{Proposed Algorithm} 
\label{sec:algo}
In this section, we develop an algorithm that minimizes the probability of misidentifying the optimal phase-shifting parameters with a budget of $n$. Note that the channel gain function  has a variable separable in the phase-shifting parameters, i.e., each $\mathrm{sinc}$ function in \eqref{eqn:contphaseHMMIMOuserchannel} depends only on one of the parameters. Hence, we can learn one of the parameters while fixing the other. 



We propose the \Algo algorithm that finds the best phase-shift parameters by exploiting the unimodal structure of mean RSS using $n$ pilot symbols. The {\it \HB } works in two phases, each consisting of $n/2$ pilot symbols. In first phase, {\it \HB} fixes $\beta_2$ to $\beta^0_2 \in \mathcal{B}_2$ and runs a baseline algorithm {\it \AlgoI } that outputs $\beta^{\hat{k}_1}_1 \in \mathcal{B}_1$.  In second phase, {\it \HB} fixes $\beta_1$ to $\beta^{\hat{k}_1}_1 \in \mathcal{B}_1$ and runs {\it \AlgoII } algorithm that output $\beta^{\hat{k}_2}_2 \in \mathcal{B}_2$. The pseudo-code of {\it \HB} is given in {\it ALGO 1}. 
\begin{algorithm}[h]
	\renewcommand{\thealgorithm}{ALGO 1: \Algo}
	\floatname{algorithm}{}
	\caption{\bf }
	\label{algo:PSHMT}
    \begin{algorithmic}[1]
        \STATE \textbf{Input:} $n$, $\beta^0_2$, $\mathcal{B}_1$ and $\mathcal{B}_2$. 
        
        \textbf{\ding{72}\ding{72}  Phase 1: Best Phase-Shift Estimation for $\beta_1$  \ding{72}\ding{72}}
        \STATE Fix $\beta_2$ at $\beta^0_2$. Set $\mathcal{D}_1 = \mathcal{B}_1$.
        \STATE Run \AlgoI$\big(n/2, \mathcal{D}_1\big)$, and $\beta^{\hat{k}_1}_1$ is the output of Phase 1 after $L_1+1$ batches.

        \textbf{\ding{72}\ding{72}  Phase 2: Best Phase-Shift Estimation for $\beta_2$  \ding{72}\ding{72}}
        \STATE Fix $\beta_1$ at $\beta^{\hat{k}_1}_1$. Set  $\mathcal{D}_2 = \mathcal{B}_2$.
        \STATE Run \AlgoII$\big(n/2, \mathcal{D}_2\big)$, and $\beta^{\hat{k}_2}_2$ is the output of Phase 2 after $L_2+1$ batches.
       \STATE \textbf{Output:} $\beta^{\hat{k}_1}_1$ and $\beta^{\hat{k}_2}_2$.
        \STATE  \textbf{Phase-shifts at HMT: } Set the phase-shift of the $(m_x,m_y)^{th}$ element at the HMT to
        \vspace{-2mm}
      $$\beta_{m_xm_y} = -\mod\big(k_0d_r(m_x\beta^{\hat{k}_1}_1 + m_y\beta^{\hat{k}_2}_2),2\pi\big).$$
    \end{algorithmic}
\end{algorithm}
\vspace{-1mm}
\subsection{The Baseline Algorithm: \Algoi}
The {\it \HB} runs {\it \Algoi} algorithm in phase $i$ for $i=1,2$. In phase $i$, {\it \Algoi} exploits the unimodal structure of the mean RSS. 
It divides the $n/2$ pilot symbols into $L_i+1$ batches (see \ref{eqn:L}). The $l$-th batch contains $N^l_i$ pilots, where $N^l_i$ is set as
\begin{equation}
\label{eqn:Nl}
    N^l_i=\begin{cases}
        \frac{2^{L_i-2}}{3^{L_i-1}}\frac{n}{2} & \mbox{ for  } l=1,2\\
        \frac{2^{L_i-(l-1)}}{3^{L_i-(l-2)}}\frac{n}{2} & \mbox{ for  }  l=3,4,\ldots, L_i+1
    \end{cases}
\end{equation}
We set $N^l_i$ as in \eqref{eqn:Nl} so that $N^l_i$ satisfy the following
\begin{align}
\label{eqn:L_T}
    \sum\limits_{l=1}^{L_i+1}N^{l}_i = 2\times \frac{2^{L_i-2}\frac{n}{2}}{3^{L_i-1}} + \sum_{l=3}^{L_i+1}\frac{2^{L_i-(l-1)}\frac{n}{2}}{3^{L_i-(l-2)}} = \frac{n}{2}.
\end{align} 
After $L_i+1$ batches, {\it \Algoi} outputs $\beta^{\hat{k}_{i}}_i \in \mathcal{B}_i$, which we declare as the best phase-shift parameter value for $\beta_i$.

The pseudo-code of {\it \Algoi} is given in ALGO 2. It works as follows. The set $\mathcal{A}_l$ denotes the set of discrete parameter values for $\beta_i$ in batch $l$, and $j_l:= |\mathcal{A}_l|$ denotes the number of values in the set $\mathcal{A}_l$. In each batch $l=1,2,\ldots L_i$, HMT uses $N^l_i/4$ number of pilots for each value in $\mathcal{S}_l = \{\beta^{k^A}_i, \beta^{k^B}_i, \beta^{k^C}_i, \beta^{k^D}_i\} \subset \mathcal{A}_l$, where $\mathcal{S}_l$ is selected such that it include the two extremes and two middle parameter values uniformly spaced from them (lines $4$-$7$). At the end of batch $l$, HMT obtains the empirical means $\hat{\mu}^l_{\beta^{k}_i}$ for each $\beta^{k}_i \in \mathcal{S}_l$ (line $8$). Based on these empirical means, the algorithm eliminates at most $1/3^{rd}$ of the number of values from the set $\mathcal{A}_l$\footnote{If the number of parameters in a batch is not a multiple of $4$, then less than $1/3^{rd}$ will be eliminated in that batch.}. Specifically, if $\beta^{k^A}_i$ or $\beta^{k^B}_i$ has the highest empirical means, then all the values succeeding $\beta^{k^C}_i$ in $\mathcal{A}_l$ are eliminated (line $11$).  Similarly, if $\beta^{k^C}_i$ or $\beta^{k^D}_i$ have the highest empirical means, then all the values preceding $\beta^{k^B}_i$ in $\mathcal{A}_l$ are eliminated (line $13$). Fig.~\ref{fig:AlgorithmPic} gives a pictorial representation of eliminating these values in the two cases. The remaining set of values is then transferred to the next batch. In batch $L_i+1$, we are left with three parameter values. Each is sampled $N^{L_i+1}_i/3$ times, and the one with the highest empirical mean is the output as the best parameter value of $\beta_i$ (lines $17$-$22$).
\begin{algorithm}[h]
\renewcommand{\thealgorithm}{ALGO 2: \Algoi} 
	\floatname{algorithm}{}
	\caption{\bf }
	\label{algo:UB3}
    \begin{algorithmic}[1]
        \STATE \textbf{Input:} $n/2$ and $\mathcal{D}_i$.
        \STATE \textbf{Initialise:} $\mathcal{A}_1 = \mathcal{D}_i$,  $j_1\leftarrow |\mathcal{A}_1|$. Calculate $L_i$ from \eqref{eqn:L}.
        \FOR{$l= 1$ to $L_i$}
            \STATE $\beta^{k^A}_i\leftarrow$ First phase-shift parameter of $\mathcal{A}_l$.
              \STATE $\beta^{k^B}_i \leftarrow \lceil j_l/3\rceil^{th}$ phase-shift parameter of $\mathcal{A}_l$.
              \STATE $\beta^{k^C}_i \leftarrow \lfloor 2j_l/3\rfloor^{th}$ phase-shift parameter of $\mathcal{A}_l$.
               \STATE      $\beta^{k^D}_i\leftarrow$ Last phase-shift parameter of $\mathcal{A}_l$.
                \STATE HMT collects $\frac{N^l_i}{4}$ pilots for each value in $\mathcal{S}_l = \{\beta^{k^A}_i, \beta^{k^B}_i, \beta^{k^C}_i, \beta^{k^D}_i\}$ and 
                 compute empirical means denoted as $\hat{\mu}^l_{\beta^{k^A}_i}$, $\hat{\mu}^l_{\beta^{k^B}_i}$, $\hat{\mu}^l_{\beta^{k^C}_i}$, $\hat{\mu}^l_{\beta^{k^D}_i}$. 
                \STATE $x^*_l = \argmax_{\beta^{k}_i \in \mathcal{S}_l} \hat{\mu}^l_{\beta^{k}_i}.$
                \IF{$\beta^{x^*_l}_i == \{\beta^{k^A}_i, \beta^{k^B}_i\}$}
                    \STATE $\mathcal{A}_{l+1} \leftarrow\{\beta^{k}_i \in \mathcal{A}_l: \beta^{k^A}_i \leq \beta^k_i \leq \beta^{k^C}_i\}$ 
                \ELSIF{$\beta^{x^*_l}_i == \{\beta^{k^C}_i, \beta^{k^D}_i\}$}
                     \STATE $\mathcal{A}_{l+1}$ $\leftarrow\{\beta^{k}_i \in \mathcal{A}_l: \beta^{k^B}_i \leq \beta^{k}_i \leq \beta^{k^D}_i\}$ 
                \ENDIF
                \STATE $j_{l+1} \leftarrow |\mathcal{A}_{l+1}|;$
               \ENDFOR
        \FOR{$l=L_i+1$}
            \STATE $\mathcal{A}_{L_i+1} = \{\beta^{k^A}_i, \beta^{k^B}_i, \beta^{k^C}_i\}$;
            \STATE HMT collects $\frac{N^{L_i+1}_i}{3}$ pilots for each value in $\mathcal{A}_{L_i+1}$, and obtain $\hat{\mu}^l_{\beta^{k^A}_i}$, $\hat{\mu}^l_{\beta^{k^B}_i}$, $\hat{\mu}^l_{\beta^{k^C}_i}$.\\ 
            \STATE Obtain $\beta^{\hat{k}_i}_i=\argmax_{k \in \mathcal{A}_{L_i+1}}\hat{\mu}_k.$
        \ENDFOR
       \STATE \textbf{Output:} $\beta^{\hat{k}_i}_i$
    \end{algorithmic}
\end{algorithm}
\begin{figure}
\vspace{-5mm}
    \centering
    \includegraphics[width = 6cm, height = 6cm, trim = 1cm 0.4cm 1.2cm 0.3cm, clip]{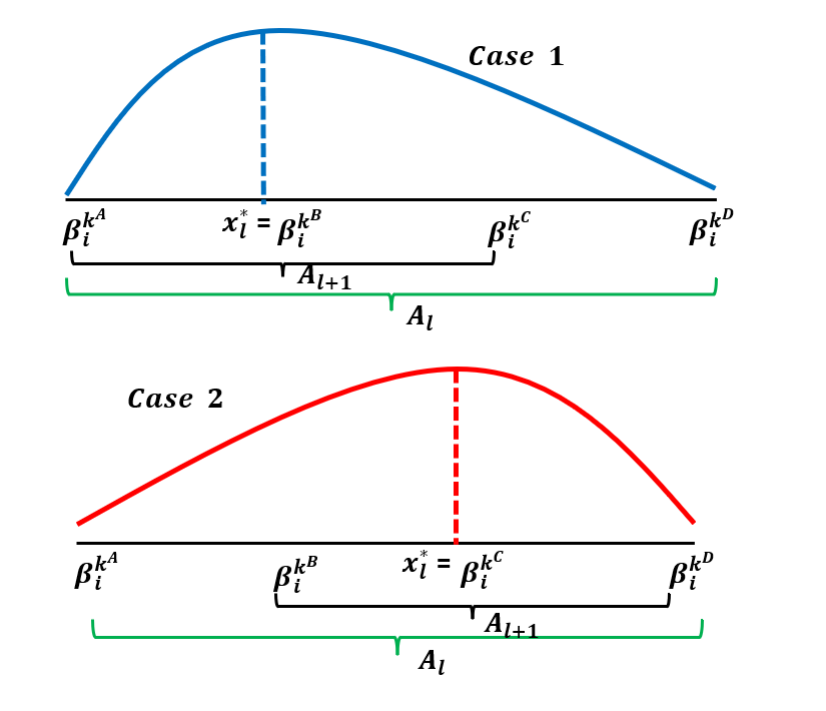}
    \caption{Different cases of elimination in batch $l$.}
    \label{fig:AlgorithmPic}
    \vspace{-5mm}
\end{figure}

Note that the values between $\beta^{k^A}_i$ and $\beta^{k^B}_i$ or the values between $\beta^{k^C}_i$ and $\beta^{k^D}_i$ are eliminated in each batch, and the values between $\beta^{k^B}_i$ and $\beta^{k^C}_i$ always survive for $i=1,2$. After batch $l=1,2,\ldots, L_i$, only $\lfloor\frac{2}{3}j_l\rfloor$ of the values survive. For ease of exposition, we will drop the $\lfloor\rfloor$ function since this drop will influence only a few constants in the analysis. Hence, based on the elimination strategy and $N^l_i$ as given in \eqref{eqn:Nl}, after the end of $L_i$ batches, there will be three parameter values, i.e. $\left(2/3\right)^{L_i}K_i =3$. Therefore, we get 
\begin{align}
\label{eqn:L}
L_i =\frac{\log_2 K_i/3}{\log_2 3/2}, \text{ for } i=1,2.
\end{align}
\section{Analysis}
\label{sec:analysis}
In this section, we find an upper bound on the error probability of {\it \HB} 
for fixed-budget pure exploration bandits with unimodal structure. 
\begin{thm}
\label{thm:PSHMTBound}
Consider $\mathcal{B}_1$ and $\mathcal{B}_2$ as given in Lemma \ref{lem:discreteBeta1}. In phase $i$, {\it \HB } runs {\it \Algoi } in $L_i+1$ batches, where  $L_i=\frac{\log_2 K_i/3}{\log_2 3/2}$, and $K_i = \abs{\mathcal{B}_i}$, $i= 1,2$. Let $\beta^{\hat{k}_i}_i$ be the output after phase $i$, and $\Delta_i = \min\limits_{2\leq k \leq K_i-1}\big|\mu_i(\beta^k_i) - \mu_i(\beta^{k-1}_i)\big|>0$ denotes the minimum gap between the mean RSS of any two neighbouring parameter values of $\beta_i$.  Then, for any $n > (K_1 + K_2)$, the error probability of {\it \HB} 
is upper bounded as 
\begin{align}
\label{eqn:PSHMTBound}
 &\Pr \bigg (\bigcup\limits_{i=1,2} \Big\{\beta^{\hat{k}_i^\pi}_i \neq \beta^*_i\Big\}\bigg)\nonumber\\
   &\le 4\sum\limits_{i=1}^2\bigg[\left(\frac{\log_2 K_i/3}{\log_2 3/2}-1\right)\exp\left\{-\frac{n\Delta_i^2}{1296\sigma^4(1+3G/\sigma^2)^4}\right\}\nonumber\\
   &\qquad \qquad +2\exp\left\{-\frac{nK_i\Delta_i^2}{1296\sigma^4(1+3G/\sigma^2)^4}\right\}\bigg].
\end{align}
\end{thm}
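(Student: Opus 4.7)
The argument follows a three-layer decomposition---over the two phases of {\it \HB}, over the $L_i+1$ batches inside each phase, and then over a constant number of pairwise empirical-mean comparisons---ultimately relying on a sub-exponential tail bound for the non-central $\chi^2$ distributed RSS.

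First I would peel off the phases by a union bound:
\begin{equation*}
\Pr\Big(\bigcup_{i=1,2}\{\beta^{\hat k_i}_i\neq\beta_i^*\}\Big) \leq \Pr(\hat\beta_1\neq\beta_1^*) + \Pr(\hat\beta_2\neq\beta_2^*\mid\hat\beta_1=\beta_1^*).
\end{equation*}
The conditional form in the second term is justified by the variable-separable sinc product in \eqref{eqn:contphaseHMMIMOuserchannel}, which forces the $\beta_2$-maximizer of $\mu(\beta_1,\cdot)$ to be $\beta_2^*$ for every $\beta_1$; each term thereby reduces to a one-dimensional unimodal pure-exploration problem with gap $\Delta_i$. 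Within phase $i$, the output is wrong only if $\beta^{k^*_i}_i$ is eliminated in some batch $l\in\{1,\ldots,L_i\}$ or mis-selected in the final batch $L_i+1$, so a second union bound reduces the task to controlling each batch-level error event $\mathcal{E}_{i,l}$.

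Next, the per-batch analysis couples unimodality to concentration. Since the elimination rule of {\it \Algoi} always keeps the surviving set $\mathcal{A}_l$ a consecutive sub-interval of $\mathcal{B}_i$, Lemma~\ref{lem:discreteBeta1} implies that the only dangerous configurations are $k^*_i<k^B$ or $k^*_i>k^C$; the middle case $k^*_i\in\{k^B,\ldots,k^C\}$ retains the optimum under either cut. By symmetry take $k^*_i<k^B$; the error event is contained in $\{\hat\mu^l_{\beta^{k^B}_i}<\hat\mu^l_{\beta^{k^C}_i}\}\cup\{\hat\mu^l_{\beta^{k^B}_i}<\hat\mu^l_{\beta^{k^D}_i}\}$ because unimodality forces $\mu_i(\beta^{k^B}_i)>\mu_i(\beta^{k^C}_i)\geq\mu_i(\beta^{k^D}_i)$, and telescoping the neighbour gaps gives that each true-mean gap is at least $(k^C-k^B)\Delta_i\geq(j_l/3)\Delta_i$. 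For the concentration step, writing $r=|\sqrt{P}H^{ff}+\zeta|^2$ exhibits $r$ as a scaled non-central $\chi^2_2$ variable with variance at most $\sigma^4(1+2G/\sigma^2)$, and a Bernstein-type MGF computation yields
\begin{equation*}
\Pr\big(|\hat\mu^l_k-\mu_i(\beta^k_i)|>\epsilon\big) \leq 2\exp\!\left(-\frac{N^l_i\,\epsilon^2}{c\,\sigma^4(1+3G/\sigma^2)^4}\right)
\end{equation*}
for an absolute constant $c$, the $(1+3G/\sigma^2)^4$ scale arising from balancing the sub-Gaussian core against the sub-exponential $\chi^2$ tail. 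Taking $\epsilon$ equal to half of the relevant true-mean gap and union-bounding over the two comparisons gives $\Pr(\mathcal{E}_{i,l})\leq 4\exp\!\big(-N^l_i(\text{gap})^2/(c\sigma^4(1+3G/\sigma^2)^4)\big)$.

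Finally I would plug in the batch sizes $N^l_i$ from \eqref{eqn:Nl} and the gap bound $\geq(j_l/3)\Delta_i=(2/3)^{l-1}K_i\Delta_i/3$. A direct algebraic check shows that for $l=3,\ldots,L_i+1$ the geometric decay of $N^l_i$ is precisely balanced by the geometric growth of the squared gap, producing a uniform exponent $n\Delta_i^2/(1296\sigma^4(1+3G/\sigma^2)^4)$; for the two small-budget batches $l=1,2$ the full-width gap $\approx K_i\Delta_i/3$ yields the stronger exponent $nK_i\Delta_i^2/(1296\sigma^4(1+3G/\sigma^2)^4)$. Summing $L_i-1$ regular terms with $2$ small-budget terms, multiplying by the union-bound factor $4$ and summing over $i=1,2$ reproduces \eqref{eqn:PSHMTBound}. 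The main obstacle I expect is the non-central $\chi^2$ concentration inequality: extracting the exact constant $\sigma^4(1+3G/\sigma^2)^4$ requires careful moment-generating-function estimates that interpolate between the sub-Gaussian and sub-exponential regimes. The rest of the proof is a methodical bookkeeping of union bounds combined with the unimodal gap accounting and the geometric batch-size design.
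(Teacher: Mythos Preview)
Your proposal is correct and mirrors the paper's proof almost exactly: the same phase/batch union bounds, the same unimodal three-case analysis, the same gap lower bound $\Delta^l_{B,C}\ge (j_l/3)\Delta_i$, and the same batch-size algebra that collapses batches $3,\ldots,L_i+1$ to a uniform exponent while batches $1,2$ pick up the extra $K_i$ factor.

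The one substantive deviation is your concentration step. You propose Bernstein on each empirical mean followed by the half-gap trick; the paper instead bounds $\Pr(\hat\mu_{C}>\hat\mu_{B})$ \emph{directly} via a comparison inequality for two independent non-central $\chi^2$ variables, writing the ratio as a doubly non-central $F$ statistic and invoking the one-sided $\chi^2$ tail bounds of \cite{JournalStats2021exponential} (this is packaged as a standalone lemma in the appendix). The payoff is that the constant $\sigma^4(1+3G/\sigma^2)^4$ drops out by purely algebraic manipulation of the resulting exponent using the crude envelope $\sigma^2\le\mu_i(\beta_i^k)\le G+\sigma^2$, with no MGF balancing needed---so the obstacle you flag as the hardest part is sidestepped rather than confronted. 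Your Bernstein route would also work, but recovering the \emph{exact} constant $1296\,\sigma^4(1+3G/\sigma^2)^4$ that way would indeed require the careful interpolation you anticipate.
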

\begin{proof}
The proof is in Sec. \ref{subsec:ProofThm1}.
\end{proof}
\vspace{-2mm}
As the first term is dominant in \eqref{eqn:PSHMTBound}, the error probability of {\it \HB} in \eqref{eqn:PSHMTBound} is thus of order $\mathcal{O}\left(\sum\limits_{i=1}^2\log K_i\exp\left\{-\frac{n\Delta_i^2}{\sigma^4(1+3G/\sigma^2)^4}\right\}\right)$. For  unstructured bandits, the error probability of {\it Seq. Halv.} algorithm is of order  $O\left(\sum\limits_{i=1}^2\log K_i\exp\left\{-\frac{n/H_i}{\log K_i\sigma^4(1+3G/\sigma^2)^4}\right\}\right)$ which matches with the lower bound up to a multiplicative factor of $\log_2(K_i)$ \cite{carpentier2016tight}, where $H_i$ is the complexity parameter that depends on the sub-optimality gap of $\beta_i$. As expected, the error probability of {\it \HB} is smaller than that of {\it Seq. Halv.}, as the $\log K_i$ factor present in the denominator of exponent of the error bound of {\it Seq. Halv.} does not appear in the exponent of the error bound of {\it \HB}. 

\section{Numerical Simulations}
\label{sec:experiments}
For the numerical evaluation, we set $L_x=L_y=1$m, carrier frequency is 30 GHz, $\lambda = 1$cm, $d_r=\lambda/4$, $F = 1.6d_r/\lambda$, and the noise power for $200$ KHz ($\sigma^2$) is $-115$ dBm as in \cite{zhang2022beam}. 
We compare {\it \HB} with the following algorithms:

{\bf Iterative Algorithm (Iterative Algo.) \cite{ghermezcheshmeh2021channel}} estimates the best phase-shifts based on stopping criteria. The algorithm is based on the initial value of $(\beta^0_1, \beta^0_2)$ \cite[Sec. V.C]{ghermezcheshmeh2021channel}. {\bf Two-Stage Phase-Shifts Estimation Algorithm (Two-Stage Algo.) \cite{Arxiv2022learningHMT}} is based on pure-exploration which performs better than {\it Iterative Algo.} \cite{ghermezcheshmeh2021channel}. {\bf Sequential Halving (Seq. Halv.) \cite{seqhalving} } is a fixed budget pure-exploration algorithm for unstructured bandits, which is optimal by \cite{carpentier2016tight}. {\bf Linear Search Elimination (LSE) \cite{ICML2011_UnimodalBandits}} is a well-known algorithm for unimodal bandits proposed for continuous arms. We have considered it for a fixed $T_E$ budget and discrete points. {\bf HBA \cite{hba} } algorithm performs well for regret minimization that exploits the unimodal structure of mean RSS. The algorithm parameters are kept at $\rho_1=3$, $\gamma=0.5$. {\bf HOSUB \cite{ICC2021_HOSUB} } is a fixed-confidence pure-exploration based algorithm that exploits the unimodality of mean RSS. 

Note that we have adopted all these algorithms, except {\it Iterative Algo.} and {\it Two-Stage Algo.}, to fixed-budget setting, which works in two phases, where phase $i$ identifies the best parameter value of $\beta_i$. We ran the simulation $1000$ times and average values are plotted with $95\%$ confidence intervals. 
\begin{figure*}[ht]
	\centering
	\begin{subfigure}{0.45\textwidth}
		\includegraphics[width = 8.3cm, height = 5cm]{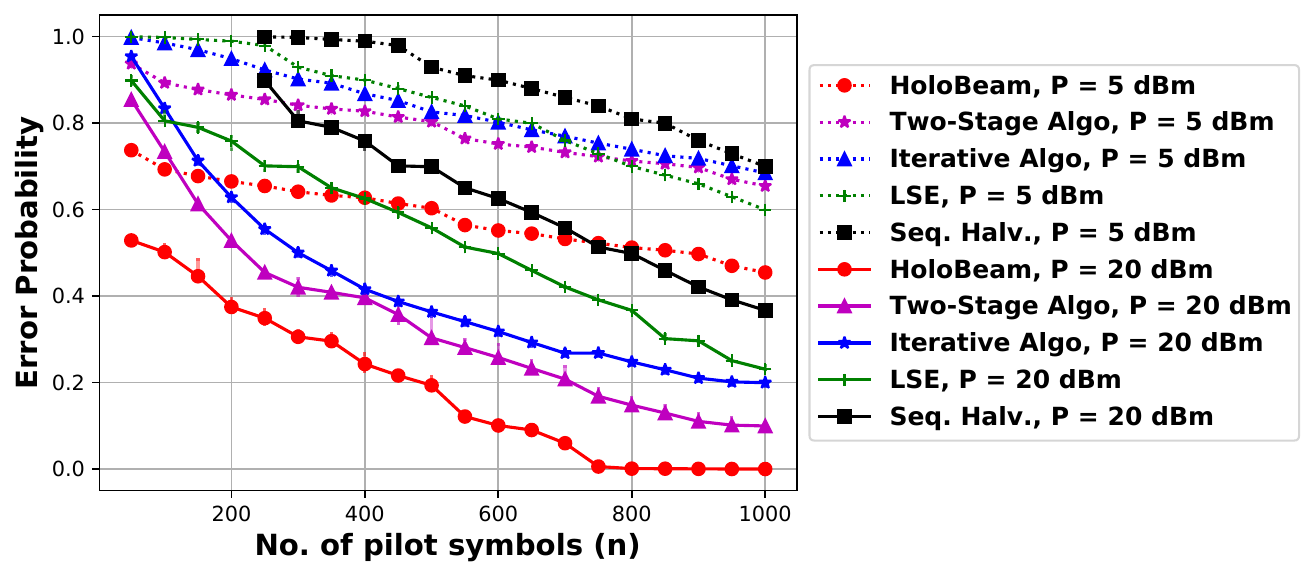}
  	\caption{Error Prob. vs No. of pilot symbols $(n)$ for \\ $d_0 = 800$m and $P = \{5, 20\}$dBm.}
    \label{fig:ErrorvsTimePower}
     \end{subfigure}
 	\begin{subfigure}{0.45\textwidth}
		\includegraphics[width = 8.3cm, height = 5cm]{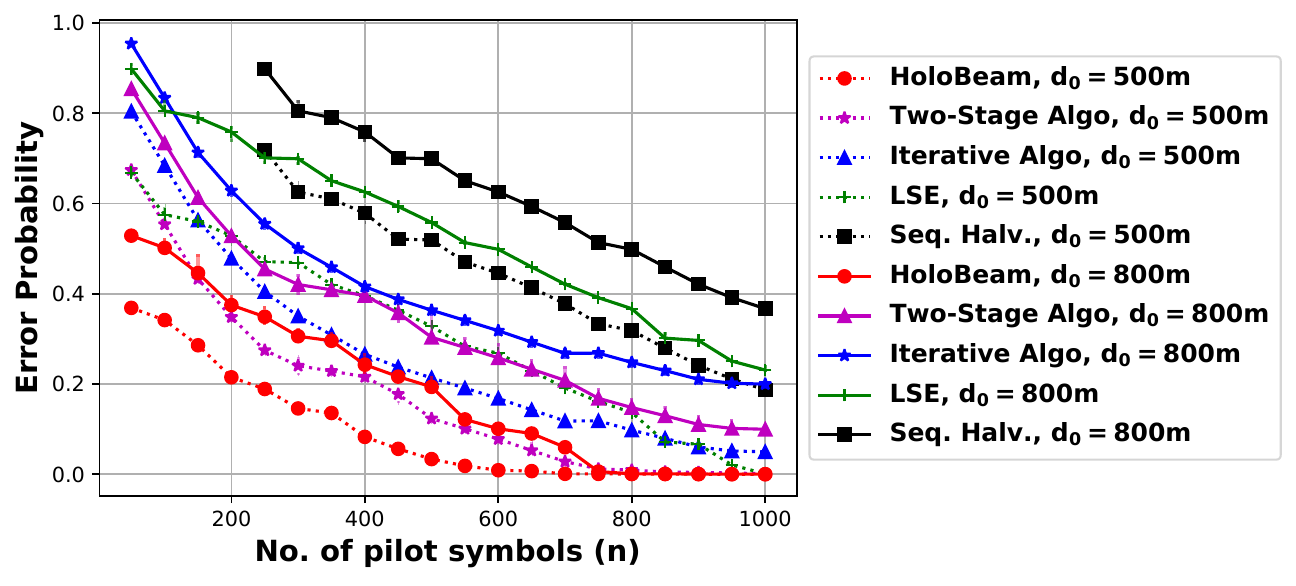}
    	\caption{Error Prob. vs No. of pilot symbols $(n)$ for \\$P = 20$ dBm and $d_0 = \{800, 500\}$ m.}
    	\label{fig:ErrorvsTimed20010}
	\end{subfigure}
    \caption{\small{Error probability performance of {\it \HB } vs State-of-the-art Algorithms}} 
    \label{fig:ErrorvsTime}
    \vspace{-4mm}
\end{figure*}
\begin{figure*}[ht]
	\centering
	\begin{subfigure}{0.45\textwidth}
		\includegraphics[width = 8.3cm, height = 5cm]{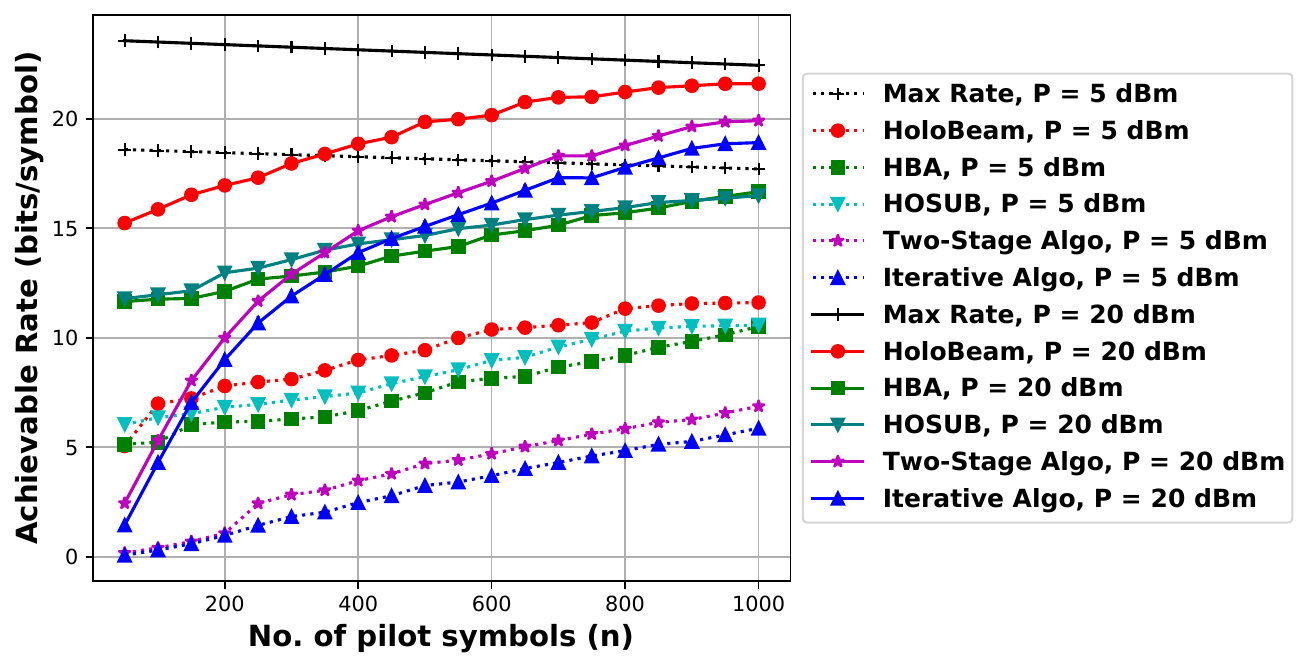}
    	\caption{Achievable rate vs No. of pilot symbols $(n)$ \\for transmit power $P =\{5, 20\}$ dBm.}
    	\label{fig:Throughput_vs_t}
	\end{subfigure}
		\begin{subfigure}{0.45\textwidth}
		\includegraphics[width = 8.3cm, height = 5cm]{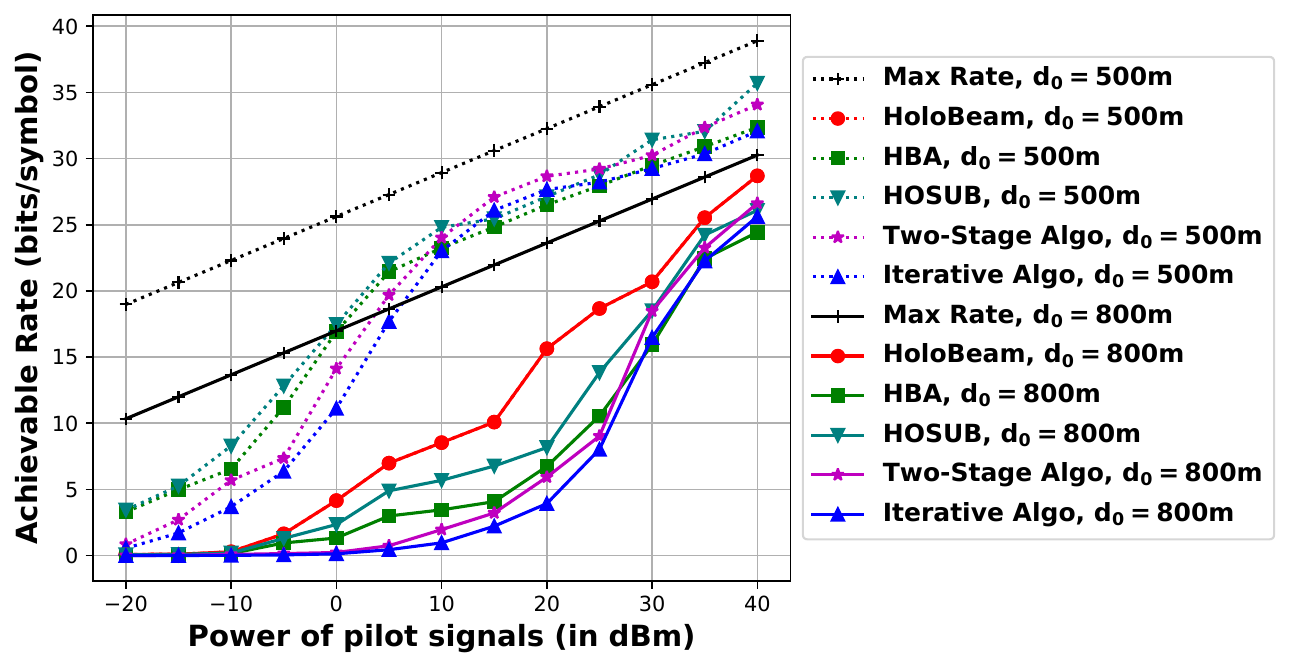}
    	\caption{Achievable rate vs transmit power of the pilot \\signals (in dBm) for $d_0 = \{500,800\}$ m and $n = 100$.}
    \label{fig:Throughput_vs_power}
	\end{subfigure}
 	\caption{Throughput performance of {\it {\it \HB}} vs State-of-the-art Algorithms}
	\label{fig:throughput}
\vspace{-4mm}
\end{figure*}
\subsection{Comparison with other pure exploration algorithms}
We compare the error probability performance for  {\it \HB} with the state-of-the-art algorithms used for pure-exploration, refer Fig. \ref{fig:ErrorvsTime}. 
When we increase the power of the pilot signals, the accuracy of the estimation of $(\alpha_1, \alpha_2)$ increases as received signals become less noisy. Thereby, the error probability decreases, refer to Fig. \ref{fig:ErrorvsTimePower}. Moreover, as distance increases, the error probability of the algorithms increases, refer to Fig. \ref{fig:ErrorvsTimed20010}. 

For transmit power $P = 20$ dBm, {\it \HB} can identify the best phase-shifts with a probability of more than 80\% within $200$ pilot symbols. In contrast, the other state-of-the-art algorithms identify the best phase-shifts with a probability of at least $45\%$. Furthermore, {\it Seq. Halv.} needs at least $K_i\log_2(K_i)$ batches (i.e. at least $250$ batches) to complete phase $i$, $i=1,2$. Hence, fewer pilots will remain for the algorithm to execute in the neighbourhood of the optimal phase-shifts compared to {\it \HB}. This demonstrates the advantage of exploiting the unimodality of the mean RSS.

\subsection{Comparison of throughput performance}
Note that {\it Seq. Halv.} and {\it LSE} are not included for throughput comparison due to their poor performance, as seen in Fig. \ref{fig:ErrorvsTime}. We have also considered the oracle scheme where $(\alpha_1,\alpha_2)$ are estimated perfectly and achieve the maximum rate. We define the throughput as the achievable data rates at the user obtained by the best estimated phase-shifts at the HMT after using $n$ number of pilots, as given by
\begin{align}
   \label{eqn:achievablerates}
&R\big(\beta^{\hat{k}_1}_1, \beta^{\hat{k}_2}_2\big) = \frac{T_D}{T}\log_2\bigg(1 +
\frac{P\big|H^{ff}(\beta^{\hat{k}_1}_1, \beta^{\hat{k}_2}_2)\big|^2}{\sigma^2}\bigg),
\end{align}
where $P$ is the transmission power at the HMT and $T_D$ is the duration left for data transmission. 


As the number of pilots increases, the channel estimation time increases, decreasing the data transmission time. At the same time, as we increase the number of pilots, the estimation accuracy of $(\alpha_1,\alpha_2)$ increases and the average throughput increases.  This trade-off in Eq. \ref{eqn:achievablerates} is captured in Fig. \ref{fig:Throughput_vs_t}. In Fig. \ref{fig:Throughput_vs_power}, with an increase in the power of the pilot signals, 
the average throughput increases, but decreases significantly as the user moves far from the HMT.  Moreover, {\it \HB} can improve the average throughput by more than 30\% compared to the state-of-the-art algorithms within $500$ pilot symbols. 

\section{Conclusion and future work}\label{sec:discussion}
We investigated learning optimal beamforming in an HMT-assisted wireless system using the fixed-budget pure-exploration framework. We exploited the parametric nature and unimodality of the channel gains to develop an algorithm named {\it \HB}. {\it \HB} identified the best phase-shifting parameters using a given number of pilots, and the error probability decays exponentially in the number of pilots. Simulations validated the efficiency of {\it \HB} as compared to the state-of-the-art algorithms. However, {\it \HB} works well when only the LOS path is present satisfying unimodal property. However, when NLOS paths are present, we face multimodal functions. It is interesting to extend {\it \HB} or develop new algorithms that adapt to channel gains with multiple modes.  

\section{Appendix}
\label{sec:appendix}
In this section, we will provide proof of the main results.
\vspace{-1mm}
\subsection{Proof of Lemma \ref{lem:discreteBeta1}}
\label{subsec:ProofLemma1}
\begin{proof}
\vspace{-1mm}
   By using \eqref{eqn:expectedRSS} and \eqref{eqn:contphaseHMMIMOuserchannel} for any $\beta^i_1 \in \mathcal{B}_1$, we obtain
\begin{align*}
    &\mu_1(\beta^i_1) - \sigma^2 = G \big|\mathrm{sinc}\big(K_x\pi(\alpha_1 - \beta^i_1)\big)\\
    &\qquad \times\mathrm{sinc}\big(K_y\pi(\alpha_2 - \beta^0_2)\big)\big|^2 
\end{align*}
For $\beta^{i+1}_1 = \beta^i_1 + d_1 \in \mathcal{B}_1$, we have the following
\begin{align*}
    &\mu_1(\beta^{i+1}_1) - \sigma^2 \nonumber\\
    &= G\big|\mathrm{sinc}\big(K_x\pi(\alpha_1 -\beta^{i+1}_1)\big)\mathrm{sinc}\big(K_y\pi(\alpha_2 - \beta^0_2)\big)\big|^2.
\end{align*}
For $d_1 = 1/K_x$,  $|\sin\big(K_x\pi(\alpha_1 - \beta^i_1 - d_1)\big)|$ $= |\sin\big(K_x\pi(\alpha_1 - \beta^i_1)\big)|$. Applying this fact, we get
\vspace{-1mm}
\begin{align}
    \frac{\mu_1(\beta^i_1) - \sigma^2}{ \mu_1(\beta^{i+1}_1) - \sigma^2} = \abs{\frac{\alpha_1 - \beta^i_1 - d_1}{\alpha_1 - \beta^i_1}}^2 = \abs{\frac{\alpha_1 - \beta^{i+1}_1}{\alpha_1 - \beta^i_1}}^2 .\label{eqn:intstep}
\end{align}
According to \eqref{eqn:intstep} we get
\begin{align}
     \mu(\beta^i_1) &\leq \mu(\beta^{i+1}_1) \text{ for }  i \in \{1,2,\dots,k_1^*-1\},\nonumber\\
    \mu(\beta^i_1) &\geq \mu(\beta^{i+1}_1) \text{ for }  i \in \{k_1^*,k_1^*+1,\dots,K_1-1\}. \qedhere 
\end{align}
\end{proof}

\subsection{Proof of Proposition \ref{cor:reducedmeanRSS}}
\label{subsec:ProofCorollary1}
\begin{proof}
    By \eqref{eqn:expectedRSS} and \eqref{eqn:contphaseHMMIMOuserchannel} for $\alpha_1$ and $\alpha_2$, we have the following
\begin{align}
    &\mu(\alpha_1,\alpha_2) = G + \sigma^2. \label{eqn:stepalpha1}
\end{align}
By \eqref{eqn:expectedRSS} and \eqref{eqn:contphaseHMMIMOuserchannel}, for $\beta^{k_1^*}_1 = \alpha_1 \pm \epsilon_1$, and $\beta^{k_2^*}_2 = \alpha_2 \pm \epsilon_2$,  we get
\begin{align}
    \mu(\beta^{k_1^*}_1, \beta^{k_2^*}_2)  
    &= G\big|\mathrm{sinc}\big(\pm K_x\pi \epsilon_1\big)\mathrm{sinc}\big(\pm K_y\pi\epsilon_2\big)\big|^2 + \sigma^2\label{eqn:stepbetaopt}.
\end{align}
As $\mathrm{sinc}(-x)=\mathrm{sinc}(x)$, the difference between \eqref{eqn:stepalpha1} and \eqref{eqn:stepbetaopt} is given by
\begin{align}
\label{eqn:epsilon12meanRSS}
& \big|\mu(\alpha_1,\alpha_2) -  \mu(\beta^{k_1^*}_1, \beta^{k_2^*}_2)\big|\nonumber\\
& = G\big[1 - \big|\mathrm{sinc}\big(K_x\pi\epsilon_1\big)\mathrm{sinc}\big(K_y\pi \epsilon_2\big)\big|^2\big].
\end{align}
The worst case can happen when $\mu_1(\beta^k_1)$ is the same for the two discrete values in the central lobe. This can be possible as the two points will lie on opposite sides of $\alpha_1$. Therefore, 
this worst case scenario would happen when $\epsilon_1 = \frac{1}{2K_x}$. Thereby, if we consider any $\epsilon_1 \in \big[0, \frac{1}{2K_x}\big)$, then 
\begin{align}
\label{eqn:Ineqalpha1}
\big|\mathrm{sinc}\big(K_x\pi(\alpha_1 - \beta^{k_1^*}_1\big)\big| 
\geq \big|\mathrm{sinc}\big(\pi/2\big)\big| = 2/\pi.
\end{align}
Similarly, the worst case scenario of $\mu_2(\beta^k_2)$ would happen when $\epsilon_2 = \frac{1}{2K_y}$, and, thereby for any $\epsilon_2 \in \big[0, \frac{1}{2K_y}\big)$, we get
\begin{align}
\label{eqn:Ineqalpha2}
\big|\mathrm{sinc}\big(K_y\pi(\alpha_2 - \beta^{k_2^*}_2\big)\big| = \big|\mathrm{sinc}\big(K_y\pi\epsilon_2\big)\big| \geq 2/\pi.
\end{align}
Hence, by applying \eqref{eqn:Ineqalpha1} an \eqref{eqn:Ineqalpha2} in \eqref{eqn:epsilon12meanRSS}, for $0 \leq \epsilon_1 < \frac{1}{2K_x}$ and $0 \leq \epsilon_2 < \frac{1}{2K_y}$, we get the bound in \eqref{eqn:epsilondiffbound}.
\end{proof}


\subsection{Proof of Theorem \ref{thm:PSHMTBound}}
\label{subsec:ProofThm1}
\vspace{-1mm}
\begin{proof}
{\it \HB} divides $n$ pilot symbols into $n/2$ for each phase $i$ which is further splitted into $L_i+1$ batches that satisfies \eqref{eqn:L_T}, where $L_i=\frac{\log_2 K_i/3}{\log_2 3/2}$ and outputs the phase-shift parameter $\beta^{\hat{k}_i}_i$ for phases $i=1,2$. Let us denote
\begin{align}
    \hspace{-2mm}I:= \Pro \big (\beta^{\hat{k}_1}_1 \neq \beta^{k_1^*}_1\big) \text{ and } II:= \Pro \big(\beta^{\hat{k}_2}_2 \neq \beta^{k_2^*}_2\big).
    \label{eqn:errProbbeta2}
\end{align}
We will now upper bound the probability of error as,
\begin{align}
\label{eqn:UnionBound}
   \Pr \bigg (\bigcup\limits_{i=1,2} \Big\{\beta^{\hat{k}_i^\pi}_i \neq \beta^*_i\Big\}\bigg) = I + II
\end{align}
\begin{itemize}
    \item \textbf{Upper Bound of I :} 
    As $\beta^{\hat{k}_1}_1$ can be eliminated in any phase $l = 1,\dots, L_1+1$, we get   
\begin{align}
\label{eq:ubound}
    \Pro(\beta^{\hat{k}_1}_1 \neq \beta^{k_1^*}_1)
    &\le \sum_{l=1}^{L_1+1}\Pr(\beta^{k_1^*}_1 \text{ elim. in } l).
\end{align}
where $L_1 = \frac{\log_2K_1/3}{\log_2 3/2}$. In phase $l$, the following cases can happen:

\begin{enumerate}
    \item \textbf{Case 0:} $\beta^{k_1^*}_1 \in \{\beta^{k^B}_1,\dots, \beta^{k^C}_1\}$, and $\hat{\mu}^l_{\beta^{k^C}_1}$ or $\hat{\mu}^l_{\beta^{k^D}_1}$ is greater than both $\hat{\mu}^l_{\beta^{k^A}_1}$ and $\hat{\mu}^l_{\beta^{k^B}_1}$, or $\hat{\mu}^l_{\beta^{k^A}_1}$ or $\hat{\mu}^l_{\beta^{k^B}_1}$ is greater than both $\hat{\mu}^l_{\beta^{k^C}_1}$ and $\hat{\mu}^l_{\beta^{k^D}_1}$.
    \item \textbf{Case 1:} $\beta^{k_1^*}_1 \in \{\beta^{k^A}_1,\dots, \beta^{k^B}_1\}$, and $\hat{\mu}^l_{\beta^{k^C}_1}$ or $\hat{\mu}^l_{\beta^{k^D}_1}$ is greater than both $\hat{\mu}^l_{\beta^{k^A}_1}$ and $\hat{\mu}^l_{\beta^{k^B}_1}$.
    \item \textbf{Case 2:} $\beta^{k_1^*}_1 \in \{\beta^{k^C}_1,\dots, \beta^{k^D}_1\}$, and $\hat{\mu}^l_{\beta^{k^A}_1}$ or $\hat{\mu}^l_{\beta^{k^B}_1}$ is greater than both $\hat{\mu}^l_{\beta^{k^C}_1}$ and $\hat{\mu}^l_{\beta^{k^D}_1}$.
\end{enumerate}
Note that $\beta^{k_1^*}_1$ will get eliminated if $\beta^{k_1^*}_1$ falls either in Case 1 or Case 2 but will not get eliminated if it falls in Case 0. Therefore, Case 0 is not favourable here. Further note that Case 1 and Case 2 are symmetrical, illustrated in Fig. \ref{case12}. Therefore, to get the upper bound of the error probability, we will focus on Cases 1 and 2, and without loss of generality, we  consider Case 1. 
\begin{figure}[h]
    \centering
    \begin{tikzpicture}[line width=1.5pt]
        \draw (-0.2,0) --++(7,0) ; 
        \foreach \i in{0,2.2,4.4,6.6}{\draw [thin](\i,0.1)--++(0,-0.2);}
        \draw[red,dashed] (0, 0.2) .. controls (5.3,1.8) .. (6.6, 0.4);
        \draw[olive,densely dashdotted] (0, 0.3) .. controls (3.3,1.8) .. (6.6, 0.3);
        \draw[blue] (0, 0.4) .. controls (1.3,1.8) .. (6.6, 0.2);
        \draw[blue, dotted, line width=1] (1.5,1.4)--++(0,-1.3)node[below]{$\beta^{k_1^*}_1$};
        \draw[red, dotted, line width=1] (5.2,1.4)--++(0,-1.3)node[below]{$\beta^{k_1^*}_1$}; 
        \draw[olive, densely dashdotted, line width=1] (3.5,1.4)--++(0,-1.3)node[below]{$\beta^{k_1^*}_1$};
        \node at (0,-0.3) {{$\beta^{k^A}_1$}};
        \node at (2.2,-0.3) {{$\beta^{k^B}_1$}};
        \draw [decorate,decoration={brace,mirror,amplitude=10pt},yshift=0pt,line width=1pt] (2.2,-0.5) -- (4.4,-0.5) ;
        \node at (3.3,-1.1) {\footnotesize$\Delta^1_{B_1,C_1}$};
        \node at (4.4,-0.3) {{$\beta^{k^C}_1$}};
        \node at (6.6,-0.3) {{$\beta^{k^D}_1$}};
        \node[blue] at (0.2,1.7){Case 1};
        \node[red] at (6.2,1.7){Case 2};
        \node[olive] at (3.2,1.7){Case 0};
    \end{tikzpicture}
    \caption{Different cases of elimination in any phase $l$.} 
    \label{case12}
    \vspace{-5mm}
\end{figure}
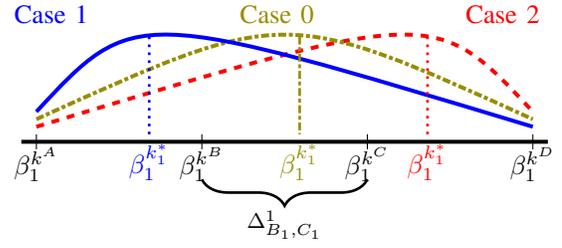
\begin{align}
  &\Pro\big(\beta^{k_1^*}_1 \text{ elim. in } l\big) \nonumber\\
  &\le \Pro\bigg(\hat{\mu}^l_{\beta^{k^C}_1} > \hat{\mu}^l_{\beta^{k^A}_1} \text{and}\,\, \hat{\mu}^l_{\beta^{k^B}_1}| \beta^{k_1^*}_1\in\{\beta^{k^A}_1,\dots,\beta^{k^B}_1\}\bigg) \nonumber\\
  &+ \Pro\bigg(\hat{\mu}^l_{\beta^{k^D}_1} > \hat{\mu}^l_{\beta^{k^A}_1} \text{and}\,\, \hat{\mu}^l_{\beta^{k^B}_1}| \beta^{k_1^*}_1\in\{\beta^{k^A}_1,\dots,\beta^{k^B}_1\}\bigg) \label{eq:mu_b.1}
  \end{align}
  For Case 1, $\mu_1(\beta^{k^C}_1)\ge\mu_1(\beta^{k^D}_1)$ by unimodality. Therefore, we can upper bound \eqref{eq:mu_b.1} as
  \begin{align}
  \label{eq:mu_b}
   &\Pro\big(\beta^{k_1^*}_1 \text{ elim. in } l\big) \nonumber\\
          &\le 2\Pro\bigg(\hat{\mu}^l_{\beta^{k^C}_1}> \hat{\mu}^l_{\beta^{k^A}_1} \text{and}\,\, \hat{\mu}^l_{\beta^{k^B}_1}| \beta^{k_1^*}_1\in\{\beta^{k^A}_1,..,\beta^{k^B}_1\}\bigg),
\end{align}
Now for Case 1, $\mu_1(\beta^{k^B}_1)$ is always greater than $\mu_1(\beta^{k^C}_1)$, but $\mu_1(\beta^{k^A}_1)$ may not be greater than  $\mu_1(\beta^{k^C}_1)$. Then, we can further upper bound \eqref{eq:mu_b} as
\begin{align}
\label{eqn:prob_ub}
     &\Pro\big(\beta^{k_1^*}_1 \text{ elim. in } l\big)\nonumber\\
     &\le 2\Pro\bigg(\hat{\mu}^l_{\beta^{k^C}_1} > \hat{\mu}^l_{\beta^{k^B}_1}|\beta^{k_1^*}_1\in\{\beta^{k^A}_1,\dots,\beta^{k^B}_1\}\bigg).
\end{align}

We next upper bound the right-hand side of \eqref{eqn:prob_ub}. We denote $\chi_{\nu}^2(q)$ as a non-central Chi-squared distribution with $\nu$ degrees of freedom and non-centrality parameter $q$. As $r_1(\beta^k_1)$ follows non-central Chi-squared distribution, according to \cite[Lemma 2]{Arxiv2022learningHMT}, we denote
\begin{align}
    \hat{\mu}_{B_1} &:=\frac{2m_1}{\sigma^2}\hat{\mu}_{\beta^{k^B}_1} \sim \chi^2_{2m_1}(m_1q_{B_1}) \label{eqn:empmeansbeta1nextbeta2}\\
     \hat{\mu}_{C_1} &:=\frac{2m_1}{\sigma^2}\hat{\mu}_{\beta^{k^C}_1} \sim \chi^2_{2m_1}(m_1q_{C_1}) \label{eqn:empmeanbeta1beta2}\\
     \mu_{B_1} &:= \mu_1(\beta^{k^B}_1) = \sigma^2 + \frac{\sigma^2}{2}q_{B_1}\label{eqn:meanbeta1nextbeat2}\\
    \mu_{C_1} &:= \mu_1(\beta^{k^C}_1) = \sigma^2 + \frac{\sigma^2}{2}q_{C_1}\label{eqn:meanbeta1beat2}
\end{align}
where $m_1 = \frac{N^l_1}{4}$, $q_{B_1} = \frac{2\big|\sqrt{P}H^{ff}(\beta^{k^B}_1, \beta^0_2)\big|^2}{\sigma^2}$ and $q_{C_1} = \frac{2\big|\sqrt{P}H^{ff}(\beta^{k^C}_1, \beta^0_2)\big|^2}{\sigma^2}$.
Now we will use the upper and lower tail bounds for non-central Chi-squared distribution \cite{JournalStats2021exponential} to derive the upper bound in \eqref{eqn:prob_ub}. 
\begin{lem}
\label{lem:Exponentialbound}
Let $X \sim \chi_{\nu}^2(q_1)$, $q_1 > 0$ and  $Y \sim \chi_{\nu}^2(q_2)$, $q_2 > 0$ and $q_1 < q_2$. $X$ and $Y$ are independent. Then,
\begin{align}
\label{eqn:Exponentialbound}
    \Pro(X > Y) \leq 2\exp\bigg\{ \frac{-\nu (\nu + q_1)^2(q_2 - q_1)^2}{4(\nu + 2q_2)^2(2\nu + q_1 + q_2)^2} \bigg\}
\end{align}
\end{lem}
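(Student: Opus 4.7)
The plan is to reduce the two-sided event $\{X>Y\}$ to one-sided deviations of $X$ and $Y$ about their respective means, and then invoke known exponential tail bounds for non-central chi-squared distributions. Since $\EE{X}=\nu+q_1<\nu+q_2=\EE{Y}$ and $X\perp Y$, for any threshold $t$ in the open interval $(\nu+q_1,\nu+q_2)$ a union bound gives
$$\Pro(X>Y)\ \leq\ \Pro(X\geq t)+\Pro(Y\leq t).$$
I would parametrise $t=\nu+q_1+\theta(q_2-q_1)$ with $\theta\in(0,1)$, so the first term is an upward deviation of $X$ by $u=\theta(q_2-q_1)$ and the second is a downward deviation of $Y$ by $v=(1-\theta)(q_2-q_1)$.

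The second ingredient is the Birg{\'e}-type exponential tail inequalities for non-central chi-squared cited in \cite{JournalStats2021exponential}. For $Z\sim\chi_\nu^2(q)$ these have the form
$$\Pro(Z\leq \nu+q-v)\leq \exp\!\left(-\tfrac{v^2}{4(\nu+2q)}\right),\qquad \Pro(Z\geq \nu+q+u)\leq \exp\!\left(-\tfrac{u^2}{4(\nu+2q+u)}\right).$$
I would apply the lower-tail bound to $Y$ and the upper-tail bound to $X$, producing two exponents that still depend on the free parameter $\theta$.

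The key design choice is $\theta^{*}=(\nu+q_1)/(2\nu+q_1+q_2)$. This is motivated by the stated target exponent, whose denominator contains $(2\nu+q_1+q_2)^2=(\EE{X}+\EE{Y})^2$, and by the fact that this weighting splits $q_2-q_1$ so that $v^{*}=(\nu+q_2)(q_2-q_1)/(2\nu+q_1+q_2)$ and $u^{*}=(\nu+q_1)(q_2-q_1)/(2\nu+q_1+q_2)$. Substituting $v^{*}$ into the lower-tail bound for $Y$ gives exponent $-(\nu+q_2)^2(q_2-q_1)^2/[4(\nu+2q_2)(2\nu+q_1+q_2)^2]$; the elementary algebraic inequality $(\nu+q_2)^2(\nu+2q_2)\geq \nu(\nu+q_1)^2$, which follows from $q_1<q_2$, loosens this into the target form. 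For the upper-tail bound on $X$, note that $\theta^{*}\leq 1$ implies $u^{*}\leq q_2-q_1$, so $\nu+2q_1+u^{*}\leq \nu+q_1+q_2$, and the short expansion $(\nu+2q_2)^2-\nu(\nu+q_1+q_2)=3\nu q_2-\nu q_1+4q_2^2\geq 0$ gives $\nu(\nu+2q_1+u^{*})\leq(\nu+2q_2)^2$; these two monotonicity steps bring the $X$-tail exponent to the same target. The two identical exponents then combine into the claimed factor of $2$.

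The main obstacle is the algebraic loosening for the $X$-tail: the $+u$ inside its denominator must be absorbed into the clean factor $4(\nu+2q_2)^2$ without losing anything essential. The argument above succeeds precisely because the choice $\theta^{*}\leq 1$ bounds $u^{*}$ by $q_2-q_1$; if a sharper constant is required one would either refine the threshold (for example, weighting by $\sqrt{\nu+2q_i}$ instead of $\nu+q_i$) or invoke a tighter form of the chi-squared upper-tail inequality from the cited source, but the overall split-and-balance template is unchanged.
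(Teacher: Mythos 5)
Your proposal is correct and follows essentially the same route as the paper: both split $\{X>Y\}$ at the common threshold $t=(\nu+q_1)(1+\delta)=(\nu+q_2)(1-\delta)$ with $\delta=(q_2-q_1)/(2\nu+q_1+q_2)$ (your $\theta^{*}$ reproduces exactly this split), apply non-central chi-squared upper/lower tail bounds from the cited source to each piece, and then algebraically dominate both exponents by the common denominator $4(\nu+2q_2)^2(2\nu+q_1+q_2)^2$. The only cosmetic difference is that the paper phrases the reduction via the doubly non-central $F$ statistic and uses the reference's Theorems 3--4 directly (whose exponents carry an extra $\nu/(\nu+2q)$ factor), whereas you start from the slightly sharper classical Birg\'e forms; your loosening inequalities are verified and land on the identical final bound.
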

\vspace{-3mm}
\begin{proof}
    We will use Theorem 3 and 4 of \cite{JournalStats2021exponential} to prove the above lemma. Refer \cite{JournalStats2021exponential} for the proof of these theorems. 
    \begin{align}
        \Pro(X > Y) & = \Pro\bigg(\frac{X/\nu}{Y/\nu} > 1\bigg)\label{eqn:DfF}
    \end{align}
    We denote $F = \frac{X/\nu}{Y/\nu}$ as doubly non-central F distrbution. According to \cite{JournalStats2021exponential}, we can write \eqref{eqn:DfF} as
    \begin{align}
      \Pro\big(F > 1\big)  \leq &\Pro\big(X > (\nu + q_1)(1 + \delta) \big) \nonumber\\
      &+ \Pro\big(Y < (\nu + q_2)(1 - \delta)\big) \label{eqn:XandYbound}
    \end{align}
    where $\delta = \frac{q_2 - q_1}{2\nu + q_1 + q_2}$ and $0 < \delta < 1$. By \cite[Thm. 3]{JournalStats2021exponential}, we obtain
    \begin{align}
        \label{eqn:Xbound}
        \Pro&\big(X > (\nu + q_1)(1 + \delta) \big) \nonumber\\
        &\leq \exp\bigg\{\frac{-\nu(\nu + q_1)^2\delta^2}{4(\nu + 2q_1)(\nu + 2q_1 + (\nu + q_1)\delta)} \bigg\}
    \end{align}
    By \cite[Thm. 4]{JournalStats2021exponential} and $q_1 < q_2$, we obtain
    \begin{align}
        \label{eqn:Yboundrevised}
        &\Pro\big(Y < (\nu + q_2)(1 - \delta) \big) \leq \exp\bigg\{\frac{-\nu(\nu + q_1)^2\delta^2}{4(\nu + 2q_2)^2} \bigg\}.    \end{align}
    \vspace{-2mm}
Applying $\nu, q_1, q_2 > 0$ and $q_1 < q_2$, we obtain
\begin{align}
     &\big(\nu + 2q_2\big)^2 > (\nu + 2q_1)\big(\nu + 2q_1 + (\nu + q_1)\delta\big).\label{eqn:ii}
\end{align}
    Applying \eqref{eqn:ii} in \eqref{eqn:Xbound}, we obtain
    \begin{align}
    \label{eqn:reXboundrevised}
         \Pro\big(X > (\nu + q_1)(1 + \delta) \big) \leq \exp\bigg\{\frac{-\nu(\nu + q_1)^2\delta^2}{4(\nu + 2q_2)^2} \bigg\}.
    \end{align}
Applying \eqref{eqn:reXboundrevised} and \eqref{eqn:Yboundrevised} in \eqref{eqn:XandYbound} we get \eqref{eqn:Exponentialbound}.
\end{proof}
For our case $\nu = \frac{N^l_1}{2}, q_1 = \frac{N^l_1}{4}q_{C_1}, q_2 = \frac{N^l_1}{4}q_{B_1}$, and $\Delta^l_{B_1,C_1}= \mu_{B_1}- \mu_{C_1} = \frac{\sigma^2}{2}(q_{B_1} - q_{C_1})$. Applying these facts and Lemma \ref{lem:Exponentialbound} in \eqref{eqn:prob_ub}, we obtain
\begin{align}
         &\Pro\big(\beta^{k_1^*}_1 \text{ elim. in } l\big)\nonumber\\
         &\le 4\exp\bigg\{\frac{-N^l_1(\Delta^l_{B_1,C_1})^2(\mu_{C_1})^2}{8 (2\mu_{B_1} - \sigma^2)^2(\mu_{B_1} + \mu_{C_1})^2}\bigg\}.\label{eqn:bound2}
\end{align}
Using the value of $G$, 
$\forall k \in \{1,2,\dots,K_i\}$ for $i=1,2$ the mean RSS can be bounded as
\begin{align}
\label{eqn:meanRSSbound}
    \sigma^2 \leq \mu_i(\beta^k_i) \leq G + \sigma^2.
\end{align}
Applying \eqref{eqn:meanRSSbound} in \eqref{eqn:bound2}, we obtain
\begin{align}
     & \Pro\big(\beta^{k_1^*}_1 \text{ elim. in } l\big)\leq 4\exp\bigg\{-\frac{N^l_1(\Delta^l_{B_1,C_1})^2}{32\sigma^4(1+3G/\sigma^2)^4}\bigg\}.\label{eqn:mubrevised}
\end{align}
For case 1, $\Delta^l_{B_1,C_1} > 0$ for the unimodal structure between $\mu_{B_1}$ and $\mu_{C_1}$. We define $\Delta_1 = \min\limits_{2\leq k \leq K_1-1}\big|\mu_1(\beta^k_1) - \mu_1(\beta^{k-1}_1)\big|$. By the fact that there are at least $j_l/3$ parameters between $\beta^{k^B}_1$ and $\beta^{k^C}_1$, we have $\Delta^l_{B_1,C_1} \geq (j_l/3) \Delta_1$.
Thus from \eqref{eqn:mubrevised} we get
\vspace{-2mm}
\begin{align}
\label{eq:tlkl}
   & \Pro(\beta^{k_1^*}_1 \text{ elim. in } l)\le  4\exp\bigg\{\frac{-N^l_1\big(j_l\Delta_1\big)^2}{288\sigma^4(1+3G/\sigma^2)^4}\bigg\}.
\end{align}
Using $j_l=\left(2/3\right)^{l}K$ in \eqref{eq:tlkl}, we can find the probability of the best value is eliminated in batches 1, 2, $L_1+1$, and the rest of the batches separately. Using \eqref{eqn:Nl} and \eqref{eqn:L} for batch 1 and 2, we have
\vspace{-5mm}
 \begin{align}
 \label{eq:round12}
    \Pro(&\beta^{k_1^*}_1 \text{ elim. in } 1\& 2)
    \le 4\exp{\left\{-\frac{nK_1\Delta_1^2}{576\sigma^4(1+3G/\sigma^2)^4}\right\}} \nonumber\\
    &+ 4\exp\left\{-\frac{nK_1\Delta_1^2}{1296\sigma^4(1+3G/\sigma^2)^4}\right\}.\nonumber\\
    &\leq 8\exp\left\{-\frac{nK_1\Delta_1^2}{1296\sigma^4(1+3G/\sigma^2)^4}\right\}.
\end{align}
For $L_1+1$ where the best value is selected among three values 
and each value is sampled $n/18$ times,  we get
\begin{align}
\label{eq:L1}
    &\Pro(\beta^{k_1^*}_1\text{ elim. in batch  }L_1+1)\nonumber\\
    &\le 4\exp\left\{-\frac{n\Delta_1^2}{1296\sigma^4(1+3G/\sigma^2)^4}\right\} 
\end{align}
By \eqref{eq:tlkl}, the error probability for remaining phases is
\begin{align}\label{eq:Lrest}
     &\Pro(\beta^{k_1^*}_1 \text{ elim. in batch 3 to batch $L_1$})\nonumber\\
     &\le 4\sum_{l=3}^{L_1}\exp\left\{\frac{-n(K_1)^2\Delta_1^2}{576\sigma^4(1+3G/\sigma^2)^4}\frac{2^{L_1-l+1}}{3^{L_1-l+2}}\left(\frac{2}{3}\right)^{2l}\right\}\nonumber\\
     &\le 4(L_1-2)\exp\left\{-\frac{n \Delta_1^2}{1296\sigma^4(1+3G/\sigma^2)^4}\right\}. 
\end{align}
By 
\eqref{eq:L1} and \eqref{eq:Lrest}, the upper bound is
\begin{align}
    &\Pro(\beta^{\hat{k}_1}_1 \neq \beta^{k_1^*}_1)\le 8\exp\left\{-\frac{nK_1\Delta_1^2}{1296\sigma^4(1+3G/\sigma^2)^4}\right\} \nonumber\\
       &+4(L_1-1)\exp\left\{-\frac{n \Delta_1^2}{1296\sigma^4(1+3G/\sigma^2)^4}\right\}.\label{eqn:Ibound}
\end{align}
\vspace{-4mm}
    \item \textbf{Upper Bound of II :} We have $L_2= \frac{\log_2 K_2/3}{\log_2 3/2}$.  Following the same lines of proof from \eqref{eq:ubound}-\eqref{eqn:prob_ub}, we obtain
    \begin{align}
\label{eq:ubound2}
    &\Pro(\beta^{\hat{k}_2}_2 \neq \beta^{k_2^*}_2) \le \sum_{l=1}^{L_2+1}\Pr(\beta^{k_2^*}_2 \text{ elim. in } l).\nonumber\\
     &\le 2\sum_{l=1}^{L_2+1}\Pro\bigg(\hat{\mu}^l_{\beta^{k^C}_2} > \hat{\mu}^l_{\beta^{k^B}_2}|\beta^{k_2^*}_2\in\{\beta^{k^A}_2,\dots,\beta^{k^B}_2\}\bigg).
\end{align}
We next upper bound the right-hand side of \eqref{eq:ubound2}. According to \cite[Lemma 2]{Arxiv2022learningHMT},
\begin{align}
    \hat{\mu}_{B_2} &:=\frac{2m_2}{\sigma^2}\hat{\mu}_{\beta^{k^B}_2} \sim \chi^2_{2m_2}(m_2q_{B_2}) \label{eqn:2empmeansbeta1nextbeta2}\\
     \hat{\mu}_{C_2} &:=\frac{2m_2}{\sigma^2}\hat{\mu}_{\beta^{k^C}_2} \sim \chi^2_{2m_2}(m_2q_{C_2}) \label{eqn:2empmeanbeta1beta2}\\
     \mu_{B_2} &:= \mu_2(\beta^{k^B}_2) = \sigma^2 + \frac{\sigma^2}{2}q_{B_2}\label{eqn:2meanbeta1nextbeat2}\\
    \mu_{C_2} &:= \mu_2(\beta^{k^C}_2) = \sigma^2 + \frac{\sigma^2}{2}q_{C_2}\label{eqn:2meanbeta1beat2}
\end{align}
where  $m_2 = \frac{N^l_2}{4}$, $q_{B_2} = \frac{2\big|\sqrt{P}H^{ff}(\beta^{\hat{k}_{L_1+1}}_1, \beta^{k^B}_2)\big|^2}{\sigma^2}$ and $q_{C_2} = \frac{2\big|\sqrt{P}H^{ff}(\beta^{\hat{k}_{L_1+1}}_1, \beta^{k^C}_2)\big|^2}{\sigma^2}$.
Furthermore, we denote 
\begin{align*}
    &\Delta^l_{B_2,C_2} = \mu_{B_2}- \mu_{C_2} > 0,\\
    &\Delta_2 = \min\limits_{2\leq k \leq K_2-1}\big|\mu_2(\beta^k_2) - \mu_2(\beta^{k-1}_2)\big|,
\end{align*}
where $\Delta^l_{B_2,C_2} \geq (j_l/3) \Delta_2$. Using these facts in the  following lines of proof from \eqref{eqn:Exponentialbound}-\eqref{eq:Lrest}, we obtain
\begin{align}
   & \Pro(\beta^{\hat{k}_2}_2 \neq \beta^{k_2^*}_2)  \le 8\exp\left\{-\frac{nK_2\Delta_2^2}{1296\sigma^4(1+3G/\sigma^2)^4}\right\}\nonumber\\
       &+ 4(L_2-1)\exp\left\{-\frac{n\Delta_2^2}{1296\sigma^4(1+3G/\sigma^2)^4}\right\}.\label{eqn:IIbound}
\end{align}
\end{itemize}
Combining \eqref{eqn:Ibound} and \eqref{eqn:IIbound}, we obtain the bound in \eqref{eqn:PSHMTBound}.\qedhere
\end{proof}

\bibliographystyle{IEEEtran}
\bibliography{ref}

\end{document}